\DeclareMathOperator{\Tr}{Tr}
\newtheorem{theorem}{Theorem}
\newtheorem{lemma}{Lemma}
\newtheorem{proposition}{Proposition}
\theoremstyle{remark}
\newtheorem{remark}{Remark}
\begin{document}

\title{Security of quantum key distribution with detection-efficiency mismatch in the multiphoton case}

\author{Anton Trushechkin}
\email{trushechkin@mi-ras.ru}
\homepage{http://www.mathnet.ru/eng/person31114}
\affiliation{Steklov Mathematical Institute of RAS, Steklov International Mathematical Center, Moscow 119991, Russia}
\affiliation{Department of Mathematics and NTI Center for Quantum Communications,\\
National University of Science and Technology MISIS, Moscow 119049, Russia}
\affiliation{QRate, Skolkovo, Moscow 143025, Russia}
\orcid{0000-0002-9541-0752}
\maketitle

\begin{abstract}
Detection-efficiency mismatch is a common problem in practical quantum key distribution (QKD) systems. Current security proofs of QKD with detection-efficiency mismatch rely either on the assumption of the single-photon light source on the sender side or on the assumption of the single-photon input of the receiver side. These assumptions impose restrictions on the class of possible eavesdropping strategies. Here we present a rigorous security proof without these assumptions and, thus, solve this important problem and prove the security of QKD with detection-efficiency mismatch against general attacks (in the asymptotic regime). In particular, we adapt the decoy state method to the case of detection-efficiency mismatch. 
\end{abstract}
\maketitle

\section{Introduction}

Quantum key distribution (QKD) is aimed to provide unconditionally secure communication. The notion of unconditional security means that an eavesdropper is allowed to have unlimited computational power. In theory, the eavesdropper is restricted only by quantum mechanics. The first QKD protocol was discovered by Bennett and Brassard in 1984 \cite{BB84} and is thus called the BB84 protocol. A number of unconditional security proofs for the BB84 protocol were proposed  \cite{Mayers0,Mayers,ShorPreskill, Renner,Koashi,GisinRenner,Toma}, which gave rise to a beautiful mathematical theory. 

Security proofs for practical implementations of QKD protocols faces with various problems caused by certain imperfections in apparatus setups \cite{Gisin,Scarani,nature2016,Xu}, which should be taken into account in security proofs. This paper is devoted to the problem of the efficiency mismatch between different threshold single-photon detectors.

In the BB84 protocol (as well as in other discrete-variable QKD protocols), information is encoded in the polarizations or phases of weak coherent pulses simulating true single-photon states. Single-photon detectors are used on the receiver side to read the information encoded in such states. Ideally, a single-photon detector fires whenever it is hit by at least one photon. However, a realistic detector is triggered by one photon only with a certain probability  $0<\eta\leq1$, which is referred to as the efficiency of a detector. Typical value of $\eta$ for the detectors used in practical QKD systems (based on avalanche photodiodes) is 0.1. The detectors based on superconductors have $\eta\approx0.9$, but they are more expensive and require cryogenic temperatures \cite{Gisin,Scarani}.

In this paper, we will consider the BB84 protocol with the active basis choice. In this case, two single-photon detectors are used on the receiver side: One for the signals encoding bit 0 and one for the signals encoding bit 1, respectively. Imperfect efficiencies of the detectors, i.e., $\eta<1$, is not a problem whenever the efficiencies of the detectors coincide with each other, because the loss in the detection rate can be treated as a part of transmission loss followed by ideal detectors with perfect efficiency. But even a small difference between the detectors' efficiencies make the aforementioned security proofs inapplicable. Since the detection loss is different for different detectors, we cannot anymore treat the detection loss as a part of the common transmission loss. This is a serious problem since, in practice, it is hard to build two detectors with exactly the same efficiencies \cite{Leuchs}. In this work, we assume that the detectors' efficiencies are constant (not fluctuating and not affected by the eavesdropper) and known to both legitimate parties and the eavesdropper. This means that we consider detection-efficiency mismatch due to manufacturing and setup, not by the eavesdropper's manipulations. A security proof for this case enhances the security of QKD based on the widely used BB84 protocol.

Detection-efficiency mismatch leads to unequal frequencies of zeros and ones in the so called raw key. Indeed, if, for example, the efficiency of the detector for the signals encoding bit 0 is higher than that of the detector for the signals encoding bit 1, then the frequency of zeros will be greater then the frequency of ones. This gives \textit{a priori} information on the raw key to the eavesdropper. Analysis of possible attacks on a QKD protocol with such \textit{a priori} information is challenging.

The first security proof for the BB84 protocol with detection-efficiency mismatch was proposed in Ref.~\cite{Fung} and generalized in Ref.~\cite{LyndSkaar}. In Ref.~\cite{Lutk-num}, higher secret key rates for BB84 with detection-efficiency mismatch were obtained numerically, which were confirmed analytically in Refs.~\cite{Bochkov,MaZhou}. Also the decoy state method was adopted to the case of detection-efficiency mismatch in Ref.~\cite{Bochkov}. In all these works, an essential restriction on the class of attacks allowed to the eavesdropper is imposed: she is not allowed to add photons to the single-photon pulses. However, in the general case, the technologically unlimited eavesdropper can increase the number of photons in the pulses, for example, to increase the probability of detection of the desired positions on the receiver side.

In the case of equal detection efficiencies, the reduction of the multiphoton (infinite-dimensional) input on the receiver side to the single-photon (finite-dimensional) input is provided by the so called squashing model \cite{squash0,squash1,squash}. Squashing model is a special map from an infinite-dimensional Hilbert space to a finite-dimensional one, which gives the same statistics of measurement results. However, the squashing model  for the detection-efficiency mismatch case, which was proposed only recently in Ref.~\cite{Zhang}, additionally requires a constraint that limits the fraction of the state with the number of photons exceeding a certain number. Hence, the multiphoton input on the receiver side should be analyzed explicitly.

In the mentioned recently published paper~\cite{Zhang}, the security of the BB84 protocol with detection-efficiency mismatch was analyzed for the multiphoton input on the receiver side (but the single-photon light source) using the numerical optimization techniques. The security is proved there under a conjecture, which is confirmed only numerically. Namely, in Ref.~\cite{Zhang}, it is conjectured that the minimal rate of double clicks  for $n>3$ photons on the receiver side is not smaller than this quantity for $n=3$. A similar conjecture for the minimal mean error rate is also used. Here we rigorously prove a variant of such conjecture for the double click rates using the entropic uncertainty relations, see Proposition~\ref{Propdb}. So, our Proposition~\ref{Propdb} may serve as the lacking part for the analysis in Ref.~\cite{Zhang}.

Another improvement of our work is an adaptation of the decoy state method to the case of detection-efficiency mismatch. In most implementations, the source (located at the lab of the sender) emits not true single-photon states but in weak coherent pulses, which make QKD vulnerable to the photon number splitting attack \cite{PNS,PNS2}. This problem can be fixed by the decoy state method, which  effectively allows us to bound the number of multiphoton pulses from above \cite{LoMa2005,Wang2005,MaLo2005,Ma2017,Trushechkin2016}. The usual decoy state method is formulated for the case of no detection-efficiency mismatch. So, a generalization of a security proof for the BB84 protocol with  detection-efficiency mismatch to the case of weak coherent pulses is not straightforward. In Ref.~\cite{Bochkov}, an adaptation of the decoy state method to the case of detection-efficiency mismatch was proposed under an additional assumption that the eavesdropper cannot add  photons to single-photon pulses. Here we relax this assumption and present an adaptation of the decoy state method for the case of detection-efficiency mismatch in the general case. 

Thus, we solve an important problem for practical QKD and rigorously prove the security of the BB84 protocol with detection-efficiency mismatch for the multiphoton Bob's input as well as adapt the decoy state method for this case (thus allowing the multiphoton Alice's output as well). The developed new methods are universal and can be used in other security proofs.

Finally, let us note that, even in the case of a constant detection-efficiency mismatch, the simple random discarding of some detection events of the detector with a higher efficiency, which is sometimes used \cite{Fung,AgnesiVilloresi}, is not a solution with proven security if the eavesdropper (Eve) can add photons to the single-photon pulses. Of course, this allows us to balance the number of ones and zeros in the raw key. However, this solution faces two problems. The first one is that it requires the precise knowledge of both efficiencies. In contrast, our method requires only the knowledge of a lower bound of the ratio $\eta$ of the lower efficiency to the higher one. Let us consider the second (and the main) problem, which concerns security. If Eve can add more photos to pulses, then she can again violate the balance between the number of ones and zeros by adding photons to the pulses and, thus, increase the probability of detection in the desired detector. Of course, by doing this, she introduces additional errors (which is clear intuitively and confirmed numerically in Refs.~\cite{Zhang,EntanglVerif}). So, it is not clear whether this strategy is advantageous for Eve. It depends on how much additional information per registered bit can Eve gain at the cost of additional errors. Nevertheless, there is no formal security proof for the case when such strategy is possible. Our proof takes into account this possible strategy by explicit consideration of a multiphoton (infinite-dimensional) Bob's input and bounding the multiphoton detection events, which are pessimistically assumed to be insecure.

Here we focus on the asymtotic regime (infinitely long keys). Theoretically, generalization to the finite-key case is straightforward using the quantum version of the de Finetti representation \cite{Renner}. But practically useful corrections are to be elaborated using, e.g., the entropy accumulation technique \cite{EntrAccum,EntrAccum2,EntrAccum3}.

The following text is organized as follows. In Sec.~\ref{SecPre}, we present preliminary information: a detection model and a brief description of the BB84 protocol. In Sec.~\ref{SecMain}, we formulate the results for the case of single-photon Alice's output and formulate the main theorem. In Sec.~\ref{SecProof}, we prove it. Finally, in Sec.~\ref{SecDecoy}, we  present an adaptation of the decoy state method.

\section{Preliminaries}\label{SecPre}

\subsection{Detection model}\label{SecDet}

We start with the description of a detection model. We will adopt the common names for the communication parties: Alice for the transmitting side, Bob for the receiver side, and Eve for the eavesdropper. Alice and Bob are also referred to as the legitimate parties. 

The information is encoded in quantum states of the two-dimensional single-photon Hilbert space $\mathbb C^2$. The elements of the standard basis ($z$ basis) will be denoted as $\ket0$ and $\ket1$. The elements of the Hadamard basis ($x$ basis) will be denoted as
\begin{equation}
\begin{split}
&\ket+=\frac{\ket0+\ket1}{\sqrt2}=H\ket0,\\
&\ket-=\frac{\ket0-\ket1}{\sqrt2}=H\ket1,
\end{split}
\end{equation} 
where $H$ is the so called Hadamard transformation.

The $n$-photon space is $(\mathbb C^2)^{\otimes_s n}$, where $\otimes_s$ is a symmetrized tensor product. If $n=0$, then $(\mathbb C^2)^{\otimes_s 0}=\mathbb C$ -- a one-dimensional complex vector space spanned by the vacuum vector $\ket{\rm vac}$. An arbitrary number of photons corresponds to the bosonic Fock space $\mathcal F(\mathbb C^2)=\bigoplus_{n=0}^\infty (\mathbb C^2)^{\otimes_s n}$. Denote $c^\dag_{z0}$ ($c_{z0}$), $c^\dag_{z1}$ ($c_{z1}$), $c^\dag_{x0}$ ($c_{x0}$), and $c^\dag_{x1}$ ($c_{x_1}$) the creation (annihilation) operators of a photon in the states $\ket0$, $\ket1$, $\ket+$, and $\ket-$, respectively.  They are related to each other as 
\begin{equation}\label{Eqcxz}
c_{xa}=\frac{c_{z0}+(-1)^ac_{z1}}{\sqrt2},\quad a=0,1.
\end{equation}
Denote 
\begin{equation}\label{Eqketn}
\ket{n_0,n_1}_b=
\frac{(c_{b0}^\dag)^{n_0}}{\sqrt{n_0!}}
\frac{(c_{b1}^\dag)^{n_1}}{\sqrt{n_1!}}
\ket{\rm vac}.
\end{equation}
The $\mathbb C^2$ space is naturally embedded into $\mathcal F(\mathbb C^2)$ if we identify $\ket a=c^\dag_{za}\ket{\rm vac}$ and $H\ket a=c^\dag_{xa}\ket{\rm vac}$, $a\in\{0,1\}$.

Consider the case of the perfect detection, i.e., each detector fires whenever it is hit by at least one photon. Then, the measurement in the basis $b\in\{z,x\}$ is described by the positive operator-valued measure (POVM; in book \cite{Holevo} it is argued that a more precise term for this notion is ``probability operator-valued measure'') $\{\tilde P_\varnothing,\tilde P_0,\tilde P_1,\tilde P_{01}\}$, where the operators correspond to four possible outcomes: no click, only detector 0 clicks, only detector 1 clicks, and both detectors click (double click), and
\begin{equation}\label{EqPOVM}
\begin{split}
&\tilde P_\varnothing^{(b)}=\ket{\rm vac}\bra{\rm vac},\\
&\tilde P^{(b)}_0=\sum_{n_0=1}^\infty\ket{n_0,0}_b\bra{n_0,0},\\
&\tilde P^{(b)}_1=\sum_{n_1=1}^\infty\ket{0,n_1}_b\bra{0,n_1},\\
&\tilde P^{(b)}_{01}=\sum_{n_0,n_1=1}^\infty\ket{n_0,n_1}_b\bra{n_0,n_1}.\\
\end{split}
\end{equation}

Now let us describe the imperfect detection. Let $\eta_0$ and $\eta_1$ be the efficiencies of the detectors and, say, $0<\eta_1\leq\eta_0\leq1$. If we adopt an approximation that the imperfect detection can be modeled by an asymmetric beam splitter followed by a perfect detection, then, as shown in Ref.~\cite{EntanglVerif}, the efficiencies can be renormalized as $\eta'_0=1$ and $\eta'_1=\eta_1/\eta_0=\eta$. The common loss $1-\eta_0$ in both detectors can be treated as  additional transmission loss. So, in the aforementioned approximation, without loss of generality, we assume that $\eta_0=1$ and $\eta_1=\eta$, $0<\eta\leq1$. Then the imperfect detection can be described by the POVM $\{P_\varnothing,P_0,P_1,P_{01}\}$ \cite{EntanglVerif}:

\begin{equation}\label{EqPOVMtild}
\begin{split}
&
P_\varnothing^{(b)}=
\sum_{n_1=0}^\infty
(1-\eta)^{n_1}\ket{0,n_1}_b\bra{0,n_1},\\
&P^{(b)}_0=
\sum_{n_0=1}^\infty\sum_{n_1=0}^\infty
(1-\eta)^{n_1}\ket{n_0,n_1}_b\bra{n_0,n_1},\\
&P^{(b)}_1=
\sum_{n_1=1}^\infty
[1-(1-\eta)^{n_1}]\ket{0,n_1}_b\bra{0,n_1},\\
&P^{(b)}_{01}=
\sum_{n_0,n_1=1}^\infty
[1-(1-\eta)^{n_1}]\ket{n_0,n_1}_b\bra{n_0,n_1}.\\
\end{split}
\end{equation}

The mismatch parameter $\eta$ is assumed to be constant and known to both legitimate parties and the eavesdropper. This means that we consider detection-efficiency mismatch due to manufacturing and setup, not by Eve's manipulations. Denote also the probability that detector~1 clicks when it is hit by exactly $n$ photons:
\begin{equation}\label{EqTheta}
\theta_n=1-(1-\eta)^{n}.
\end{equation}

In fact, as we will discuss later in Remark~\ref{RemNonLinDet}, our analysis will rely neither on the precise form (\ref{EqPOVMtild}) of the POVM nor on the precise formula (\ref{EqTheta}) for $\theta_n$. However, we will essentially use the reduction to the case $\eta_0=1$. So, we assume that detector 0 (the detector with a larger efficiency) can be modeled by a beam splitter with the transmission coefficient $\eta_0$ followed by a perfect detection, but detector 1 is not assumed to be equivalent to this model. 

Let us explain the last statement. Denote $\theta^{(0)}_n$ ($\theta^{(1)}_n$) the original probability that detector~0 (1) clicks whenever it is hit by exactly $n$ photons. Here, ``original'' means  before we factor out the common noise. We  assume that:
\begin{enumerate}[(i)]
\item $\theta^{(0)}_n=1-(1-\eta_0)^{n}$, where $\eta_0$ is the original efficiency of detector~0. That is, detector~0 is modeled by a beam splitter with the transmission coefficient $\eta_0$ followed by a perfect detection.

\item  $\theta^{(1)}_n\leq \theta^{(0)}_n$, i.e., for any number of arriving photons, the ``$n$-photon efficiency'' of detector~1 does not exceed the corresponding efficiency of detector~0.
\end{enumerate}
This means that we can virtually set a beam splitter with the transmission coefficient $\eta_0$  before the detection scheme and consider detector~0 perfect. However, there might be additional loss on detecor~1 not necessarily described by another beam splitter. Then we put $\theta_n=\theta_n^{(1)}/\theta_n^{(0)}$ and $\eta=\min_{n\geq1}\theta_n$. The dependence of $\theta_n$ on $n$ may be arbitrary. 

Such generalization will require only minor modifications of the derived formulas. Namely, in some cases we will use the explicit expression (\ref{EqTheta}) for $\theta_2$. However, we can substitute it by an upper or a lower bound since $\eta\leq\theta_2\leq1$.

\subsection{Brief description of the BB84 protocol}\label{SecBB84}

In this subsection, we briefly describe the BB84 protocol. For more details see, e.g., Refs.~\cite{Gisin,Scarani,Fung2018}. We start with the prepare\&measure formulation, where Alice prepares single-photon states and sends them to Bob, who measures them.

(1)~~Alice randomly chooses $N$ bases $\mathbf b=b_1\ldots b_N$, $b_i\in\{z,x\}$. These random choices are independent and identically distributed. Denote $p_z$ and $p_x=1-p_z$ probabilities of choosing the $z$ basis and the $x$ basis, respectively. Bob also randomly generates $N$ bases $\mathbf b'=b'_1\ldots b'_N$, $b'_i\in\{z,x\}$ with the same distribution, independently of the Alice's choices.

(2)~~Alice randomly generates a large number $N$ of bits $\mathbf a=a_1\ldots a_N$, $a_i\in\{0,1\}$. These random bits are independent and uniformly distributed. Then Alice prepares $N$ quantum states $\ket{a_i}\in\mathbb C^2$ for $b_i=z$ and $H\ket{a_i}$ for $b_i=x$, $i=1,\ldots,N$, and sends them to Bob. Bob measures them according to POVM (\ref{EqPOVMtild}) with $b=b_i$ for a given position $i$ and records the results $\mathbf a'=a'_1\ldots a'_N$: no click corresponds to $a'_i=\varnothing$, a click of  detector 0 or 1 corresponds to $a'_i=0$ or $a'_i=1$, respectively. In case of a double click, Bob randomly (with equal probabilities) chooses $a'_i=0$ or $a'_i=1$. But also he can count the number of double clicks  (this will be used in the analysis). The strings $\mathbf a$ and $\mathbf a'$ are referred to as the \textit{raw keys}.

(3)~~Using a public authentic classical channel, Alice and Bob announce their bases (the strings $\mathbf b$ and $\mathbf b'$), also Bob announces the positions where he has obtained a click (no matter single or double). We adopt a version of the protocol where only the $z$ basis is used for key generation. So, Alice and Bob keep positions where $b_i=b'_i=z$ and Bob obtained a click. Denote the strings with only such positions as $\tilde{\mathbf a}$ and $\tilde{\mathbf a}'$, which are referred to as the \textit{sifted keys}. The other (dropped) positions still can be used for  estimation of the achievable key generation rate. 

(4)~~Communicating over the public authentic classical channel, Alice and Bob estimate the Eve's information about the Alice's sifted key and either abort the protocol (if this information is too large) or perform procedures of error correction and privacy amplification to obtain a \textit{final key}. The former allows them to fix the discrepancies between their sifted keys. Here we adopt a version where positions with $\tilde a_i\neq\tilde a'_i$ are treated as the Bob's errors. During the error correction procedure, Bob corrects these errors and obtains the key identical to the Alice's one, i.e., $\tilde{\mathbf a}$.

In the privacy amplification procedure, Alice and Bob apply a randomly generated special map (a hash function)  to the sifted key and obtain a shorter key, which is a final key. Alice and Bob must have an upper bound for the Eve's information about the Alice's sifted key $\tilde{\mathbf a}$ (or, equivalently, a lower bound on the Eve's ignorance about the Alice's sifted key) to calculate the length $l$ of the final key such that the eavesdropper has only an infinitesimal (as $N\to\infty$) information about the final key. The corresponding ratio $l/N$ will be referred to as the \textit{secret key rate}. See Ref.~\cite{DW} for a formal definition of secret key rate. Here we do not need a formal definition. Actually, we will use formula (\ref{EqDW}) below as the starting formula for the secret key rate. Note that sometimes the secret key rate is defined as the ratio of $l$ to the length of the sifted key (rather than the raw one).

In this paper, we consider only the asymptotic case $N\to\infty$ and do not address the finite-key effects. In this case, we can put $p_x\sim1/\sqrt{N}\to0$, $p_z\to1$.

For mathematical analysis of the security of the protocol, it is convenient to reformulate it in terms of an equivalent entanglement-based version. In the entanglement-based version of the protocol, step (2) in the description above is altered. Alice does not generate the string $\mathbf a$ and does not prepare and send quantum states. Instead, a source of entangled states generates a state
\begin{equation}\label{EqPhi}
\rho_{AB}=\ket{\Phi^+}\bra{\Phi^+}
\end{equation}
in the Hilbert space $\mathbb C^2\otimes\mathcal F(\mathbb C^2)$,
where 
\begin{equation*}
\begin{split}
\ket{\Phi^+}&=\frac1{\sqrt2}(\ket{0}\otimes\ket{1,0}_z+\ket{1}\otimes\ket{0,1}_z)\\
&=\frac1{\sqrt2}(\ket{+}\otimes\ket{1,0}_x+\ket{-}\otimes\ket{0,1}_x)
\\
&\in\mathbb C^2\otimes\mathcal F(\mathbb C^2).
\end{split}
\end{equation*}
and sends the first subsystem (a qubit) to Alice and the second one to Bob. Then, like Bob, Alice performs a measurement in the basis $b_i$ (the POVM $\{\ket0\bra0,\ket1\bra1\}$ for the $z$ basis and $\{\ket+\bra+,\ket-\bra-\}$ for the $x$ basis) and records the result $a_i$. Since Alice's measurement is virtual, her measurement corresponds to detectors with the perfect efficiencies. This scheme is also referred to as the source replacement scheme \cite{BBM,CLL,FL}.

Let us highlight that the notations $\ket0$ and $\ket1$ denote logical bits, while notations for optical modes are of the form $\ket{n_0,n_1}_b$, i.e., firstly, always include two modes and, secondly, include the subindex $b$ denoting the basis.

In the entanglement-based QKD, Eve is assumed to control the source of entangled states, i.e., she can replace the density operator $\rho_{AB}$ given in (\ref{EqPhi}) by her own arbitrary density operator $\rho_{ABE}$ acting on the Hilbert space $\mathbb C^2\otimes\mathcal F(\mathbb C^2)\otimes\mathcal H_E\equiv \mathcal H_A\otimes\mathcal H_B\otimes\mathcal H_E$, where $\mathcal H_E$ is an arbitrary separable Hilbert space. Eve is assumed to own the additional register $E$ corresponding to the space $\mathcal H_E$. The transmission loss in the original prepare\&measure formulation can be included in the state $\rho_{ABE}$. But the detection loss is basis-dependent in the case of  detection-efficiency mismatch and cannot be included in $\rho_{ABE}$. 

Let us illustrate the last point. Let, e.g., $\rho_B=\ket0\bra0$. If Bob chooses the $z$ basis, then this state goes to the ideal detector~0 and the detection probability is equal to one. However, if Bob chooses the $x$ basis, then, with equal probabilities, this state goes to either detector~0 (detection with probability~1) or detector~1 (detection with probability $\eta$). So, the mean detection probability is $(1+\eta)/2$. So, the mean detection probability depends on the basis and, hence, cannot be incorporated into the state $\rho_{ABE}$ generated before the choices of the bases.

\section{Formulation of results for the case of single-photon Alice's output}\label{SecMain}

\subsection{Problem statement}\label{SecProblem}

Let us adopt the following common agreement: For any tripartite density operator $\rho_{ABE}$, the notations like $\rho_{AB}$, $\rho_B$, etc. mean $\rho_{AB}=\Tr_E\rho_{ABE}$, $\rho_B=\Tr_{AE}\rho_{ABE}$, etc. Denote $\mathfrak T(\mathcal H)$ the space of trace-class operators on a Hilbert space $\mathcal H$. If $\mathcal H$ is finite-dimensional, then $\mathfrak T(\mathcal H)$ coincides with the space of all linear operators on $\mathcal H$. If $\Phi$ is a quantum transformation acting on, for example, Bob's Hilbert space, i.e., $\Phi$ is a linear map from $\mathfrak T(\mathcal F(\mathbb C^2))$ to itself, then  $\Phi(\rho_{ABE})\equiv({\rm Id}_A\otimes\Phi\otimes{\rm Id}_E)(\rho_{ABE})$ with ${\rm Id}$ being the identity quantum transformation on the corresponding Hilbert space, i.e., the identity operators in the spaces $\mathfrak T(\mathcal H_{A,B,E})$. Also denote ${\rm Id}_{BE}={\rm Id}_B\otimes{\rm Id}_E$. 

Let, as before, $\rho_{ABE}$ be a tripartite density operator corresponding to a sending.  Denote
\begin{equation}\label{EqRhoPrime}
\rho'_{ABE}=\mathcal G(\rho_{ABE})\equiv G\rho_{ABE}G,
\end{equation}
where
\begin{equation}\label{EqG}
G=\sqrt{I_B-P_\varnothing^{(z)}}.
\end{equation}
$I_{A,B,E}$ are the identity operators in the corresponding Hilbert spaces (not to be confused with ${\rm Id}_{A,B,E}$, which denote the identity ``superoperators'': the identity operators in the spaces of the trace-class operators on Alice's, Bob's and Eve's Hilbert spaces). Also, we will use the denotation $I_{BE}=I_B\otimes I_E$. Transformation~(\ref{EqRhoPrime}) corresponds to the following partial measurement: Instead of the full measurement in the $z$ basis described by Eq.~(\ref{EqPOVMtild}), we just check whether we obtain a detection or not. If we do not obtain a detection, we sift this position out. In other words, Eq.~(\ref{EqRhoPrime}) is a post-selection map. Then
\begin{equation*}
p_{\rm det}=\Tr\rho'_{ABE}\leq1
\end{equation*}
is the detection probability for Bob if he measures in the $z$ basis. Since Bob announces the positions where he has obtained a click, this quantity is known to the legitimate parties. Denote also $\tilde\rho'_{ABE}=p_{\rm det}^{-1}\rho'_{ABE}$ the corresponding normalized state.

The Alice's measurement in the $z$ basis and the $x$ basis can be described by the decoherence  maps $\mathcal Z$ and $\mathcal X$ in the corresponding bases:
\begin{equation}\label{EqZX}
\begin{split}
\mathcal Z(\rho_{A})&=\sum_{a=0}^1
\ket a\bra a
\rho_{A}
\ket a\bra a,\\
\mathcal X(\rho_{A})&=
\sum_{a=0}^1
(H\ket a\bra aH)
\rho_{A}
(H\ket a\bra aH).
\end{split}
\end{equation}
Denote
\begin{equation}
\begin{split}
\tilde\rho'_{ZBE}&=\mathcal Z(\tilde\rho'_{ABE}),\\
\tilde\rho'_{XBE}&=\mathcal X(\tilde\rho'_{ABE}).
\end{split}
\end{equation}

According to the Devetak--Winter theorem \cite{DW}, the secret key rate is given by
\begin{equation}\label{EqDW}
K=p_{\rm det}[H(Z|E)_{\tilde\rho'}-H(Z|B)_{\tilde\rho'}],
\end{equation}
where the conditional von Neumann entropies are calculated for the state $\tilde\rho'_{ZBE}$. Here we have taken into account that only the positions where both legitimate parties used $z$ basis participate in the sifted key and $p_z\to1$ as $N\to\infty$.

Let us comment the applicability of the Devetak--Winter theorem to the case of detection-efficiency mismatch. This is a general theorem and is formulated in terms of an abstract tripartite state. In the state $\tilde\rho'_{ZBE}$, the detector loss has been already taken into account. In fact, the introduction of the map $\mathcal G$ and the state  $\tilde\rho'_{ZBE}$ means that we have represented an imperfect detection as a detection loss followed by a perfect detection. The state $\tilde\rho'_{ZBE}$ is the state after the detection loss and after sifting the unregistered positions, but before the perfect detection. Hence, the Devetak--Winter theorem can be applied to this tripartite state. The direct application of this theorem gives formula (\ref{EqDW}) without the prefactor $p_{\rm det}$, but such expression would correspond to the definition of the secret key rate as the ratio of the final key length to the sifted key length (since the state $\tilde\rho'_{ZBE}$ corresponds to the sifted key). Since here we define the secret key rate as a ratio of the final key length to the number of the emitted pulses, we should multiply the Devetak--Winter formula by the ratio of the sifted key length to the number of the emitted pulses, which is exactly the detection rate $p_{\rm det}$.

The first and the second terms in the brackets in Eq.~(\ref{EqDW}) characterize the Eve's and  Bob's ignorances about the Alice's sifted key bit, respectively. Here we assume that the length of the error-correcting syndrome is given by the Shannon theoretical limit. Otherwise, a factor $f>1$ should be added to the second term. The present-day error-correcting codes allow for $f=1.22$. An error-correction procedure for QKD based on the low-density parity-check codes, which decreases the factor $f$, is given in Refs.~\cite{SymBl1,SymBl2}. A syndrome-based QBER estimation algorithm, which also can decrease $f$, is proposed in Ref.~\cite{fly}.

Using Fano's inequality, the second term in the right-hand side of Eq.~(\ref{EqDW}) is bounded from above by $h(Q_z)$, where $h(x)=-x\log x-(1-x)\log(1-x)$ ($\log\equiv\log_2$) is the binary entropy and $Q_z$ is the quantum bit error rate (QBER), i.e., error ratio in the $z$ basis. This value is observed by Alice and Bob, hence, thus term is known by the legitimate parties. Formally, $Q_z$ is defined as
\begin{equation}\label{EqQz}
\begin{split}
Q_z&=\Tr\tilde\rho'_{AB}(\ket0\bra0\otimes P_1^{(z)}
\\&+\ket1\bra1\otimes P_0^{(z)}+I_A\otimes P_{01}^z/2).
\end{split}
\end{equation}
The last term here means that, in the case of a double click, a bit value 0 or 1 is assigned with the probability 1/2. So, 1/2 is the error probability in the case of a double click.

But the first term (the Eve's ignorance) in the right-hand side of Eq.~(\ref{EqDW}) is unknown to Alice and Bob since the state $\rho_{ABE}$ is chosen by Eve. They should estimate the Eve's ignorance from below using the observable data. To simplify the problem and eliminate the dependence of the right-hand side of Eq.~(\ref{EqDW}) on the Eve's subsystem, we apply the entropic uncertainty relations \cite{BertaEUR,ColesEUR}:
\begin{equation}
H(Z|E)_{\tilde\rho'}+H(X|B)_{\tilde\rho'}\geq1,
\end{equation}
where the second conditional entropy is calculated for the state $\tilde\rho'_{XBE}$, or, equivalently, $\tilde\rho'_{XB}$.

So, the secret key rate is lower bounded by 
\begin{equation}\label{EqDWmin}
K\geq p_{\rm det}[1-\sup_{\rho_{AB}\in\mathbf S}H(X|B)_{\tilde\rho'}-h(Q_z)],
\end{equation}
where 
\begin{multline}\label{EqS}
\mathbf S=\{\rho_{AB}\in\mathfrak T(\mathcal H_A\otimes\mathcal H_B)\|
\,\rho\geq0,\:\Tr\Gamma_i\rho_{AB}=\gamma_i,\\i=1,\ldots,m\}.
\end{multline}
Here $\Gamma_i$ are linear operators acting on $\mathcal H_A\otimes\mathcal H_B$ corresponding to observables of Alice and Bob. They impose constraints on $\rho_{AB}$ since the latter should be consistent with the observed data. The detection probability $p_{\rm det}$ is one of the observed quantities and, thus, fixed, see Eq.~(\ref{EqGamma1}) below.

\begin{remark}\label{RemPhaseErr}
Note that, in Eq.~(\ref{EqDWmin}), we should estimate the Bob's ignorance about the Alice's outcome in the $x$ basis for the state $\tilde\rho'_{AB}$, i.e., after the attenuation corresponding to the measurement in the $z$ basis, see Eq.~(\ref{EqRhoPrime}). In the case of equal detector efficiencies, Alice and Bob can estimate this entropy in the same way as the last term $H(Z|B)$ in Eq.~(\ref{EqDW}). Namely, Bob just measures his state in the $x$ basis, calculates the ratio of discrepancies between his and the Alice's results in the same basis, and use the Fano's inequality for the estimation of $H(X|B)$. But in the case of  detection-efficiency mismatch, he cannot proceed in such a way. Indeed, now the detection loss is basis-dependent, and the imperfect measurement in the $x$ basis leads to the state $G^{(x)}\rho_{AB}G^{(x)}$, where $G^{(x)}=\sqrt{I_B-P_\varnothing^{(x)}}$ instead of $\rho'_{AB}=G\rho_{AB}G$, where $G$ is given by Eq.~(\ref{EqG}). This is the main difficulty in the security analysis for the case of detection-efficiency mismatch. The inconsistency of the phase error rate with the bit error rate in the $x$ basis is a common problem for QKD with device imperfections \cite{GLLP}.
\end{remark}

Also note that the conditional entropy can be expressed as
\begin{equation}\label{EqHD}
H(X|B)_{\tilde\rho'}=-D(\tilde\rho'_{XB}\|I_A\otimes\tilde\rho'_B),
\end{equation}
where $D(\sigma\|\tau)=\Tr\sigma\log\sigma-\Tr\sigma\log\tau$ is the quantum relative entropy, which is a jointly convex function. So, problem (\ref{EqDWmin}) is a convex minimization problem subject to linear constraints. In the case of finite-dimensional $\mathcal H_A\otimes\mathcal H_B$, it can be solved numerically, as was proposed in Refs.~\cite{Lutk-num,Lutk-num-pre}. But we have an infinite-dimensional $\mathcal H_B$, so, an analytic bound for $H(X|B)$ (or, at least, an analytic reduction to a finite-dimensional optimization problem) is required for a rigorous security proof.

\subsection{Main theorem and simulation}\label{SecMainTh}

Consider the following linear constraints:

(1)~~Probability of detection (for the $z$ basis)
\begin{equation}\label{EqGamma1}
\Gamma_1=I_A\otimes(I_B-P_\varnothing^{(z)}),
\quad \Tr\Gamma_1\rho_{AB}=p_{\rm det}.
\end{equation}

(2)~~Weighted mean erroneous detection rate in the $x$ basis
\begin{equation}\label{EqGamma2}
\begin{split}
&\Gamma_2=
\eta^{-1}\ket+\bra+\otimes
\left(P^{(x)}_1+\frac12P^{(x)}_{01}\right)
\\
&\quad\,+
\ket-\bra-\otimes
\left(P^{(x)}_0+\frac12P^{(x)}_{01}\right),
\\
&\Tr\Gamma_2\rho_{AB}=q.
\end{split}
\end{equation}
This means that, in the case of a double click, a bit value 0 or 1 is assigned with the probability 1/2, but we consider a weighted sum: erroneous ones are taken with the weight $\eta^{-1}$ and erroneous zeros are taken with the weight 1. Also note that $q$ is not the usual QBER. This is a weighted sum of probabilities of erroneous detections rather than the ratio of erroneous detections to all detections (cf. Eq.~(\ref{EqQz}), where the state $\tilde\rho'_{AB}=p_{\rm det}^{-1}\rho'_{AB}$ instead of $\rho'_{AB}$ is used). 

A weighted sum can be informally explained as follows. The error rate in the $x$ basis is used to estimate Eve's potential information on the sifted key. However, for this aim, we need the error rate which would be in the case of perfect detection, which is not directly observable. Loss on detector~1 reduces the observable error rate. So, in order to estimate the ``actual'' error rate from above (a pessimistic bound), we should divide the observable error rate on detector~1 by the detector efficiency $\eta$. If all photons arriving at Bob's lab are single photons, such division exactly restores the ``actual'' error rate (i.e., which would be observed in the case of perfect detection). If some arriving pulses are multiphoton, which have higher probabilities of detection, such division gives an upper bound of the ``actual'' error rate.

(3)~~Probability of a single click of  detector~1 for the measurement in the $z$ basis
\begin{equation}\label{EqGamma3}
\Gamma_3=
I_A\otimes P^{(z)}_{1},
\quad
\Tr\Gamma_3\rho_{AB}=p_1.
\end{equation}

(4)~~Mean probability of a double click
\begin{equation}\label{EqGamma4}
\Gamma_4=
I_A\otimes\frac12(P^{(z)}_{01}+P^{(x)}_{01}),
\quad
\Tr\Gamma_4\rho_{AB}=p_{01}.
\end{equation}

We will also use the quantity
\begin{equation}
t=\Tr\tilde\Gamma_1\rho_{AB},\quad
\tilde\Gamma_1=I_A\otimes(I_B-\tilde P_\varnothing^{(z)}),
\end{equation}
where, recall, $\tilde P_\varnothing^{(z)}$ was defined in Eq.~(\ref{EqPOVM}) as an operator from the POVM corresponding to the ideal detection. So, $t$ is the detection rate which would be observed in the case of ideal detection and, thus, is not directly observable. The notation $t$ comes from ``transparency'' since it actually denotes the transparency of the transmission line and does not take into account the detection loss.

Analogously to $p_1$, let us define $p_{0+01}$: the probability of either a single click of detector~0 or a double click, so that $p_{\rm det}=p_{0+01}+p_1$.

Further, since the number of photons arriving at Bob's side is arbitrary, the mentioned quantities $p_{\rm det}$, $q$, $p_1$, $t$, $p_{0+01}$ and  can be decomposed into sums of contributions from the $n$-photon pulses arriving at Bob's sides:
\begin{subequations}\label{EqDecompositions}
\begin{gather}
p_{\rm det}=\sum_{n=1}^\infty p_{\rm det}^{(n)},
\quad
q=\sum_{n=1}^\infty q_n,
\quad t=\sum_{n=1}^\infty t_n,
\\
p_1=\sum_{n=1}^\infty p_1^{(n)},
\quad
p_{0+01}=\sum_{n=1}^\infty p_{0+01}^{(n)}.
\end{gather}
\end{subequations}
The same is true for $p_{01}$, but we will not use such decomposition for this quantity. Also, we will use the notations 
\begin{equation*}
p_{\rm det}^{(3+)}=\sum_{n=3}^\infty p_{\rm det}^{(n)},
\quad
p_1^{(3+)}=\sum_{n=3}^\infty p_1^{(n)},
\quad
t_{3+}=\sum_{n=3}^\infty t_n.
\end{equation*}

Let us establish relations between $t_n$ and $p_{\rm det}^n$: Since $t$ is not directly observable, we will need certain estimations. We have
\begin{equation}\label{Eqpdetp}
p_{\rm det}^{(n)}=p_{0+01}^{(n)}+p_1^{(n)}
\end{equation}
and
\begin{equation}\label{Eqtp}
t_n=p_{0+01}^{(n)}+\frac{p_1^{(n)}}{\theta_n}.
\end{equation}
Let us explain the last relation. Imperfect detection of  detector~1 turns some potential double clicks into single clicks of  detector~0 and some potential single clicks of  detector~1 into no clicks. So, only the signals which lead to a single click of   detector~1 in the case of perfect detection can be lost in the case of the imperfect one. The terms $p_1^{(n)}/\theta_n$ mean that the fraction $1-\theta_n$ is lost on detector~1 from the $n$-photon part. 

In view of Eqs.~(\ref{Eqpdetp}) and~(\ref{Eqtp}), we have
\begin{equation}\label{Eqtnineq}
p^{(n)}_{\rm det}\leq t_n\leq \frac{p^{(n)}_{\rm det}}{\theta_n}
\end{equation}
and
\begin{equation}\label{Eqt1}
t_1=p_{\rm det}^{(1)}
+p_1^{(1)}\left(\frac1{\eta}-1\right).
\end{equation}

Our security proof has two parts: analytic bounds of Eve's ignorance for the single-photon part and analytic bounds for the fraction of multiphoton pulses. The second part is essentially based on the following estimates we are going to prove:

(I)~~If a pulse arriving at Bob's side contains three or more photons, the probability of a double click at least in one basis ($z$ or $x$) is strictly positive (this will be proved in Lemma~\ref{Lemdbmain}). This is true whenever efficiencies of both detectors are strictly positive. Moreover, in the case of perfect detection, the mean  probability of a double click is lower bounded by the unique root $p_{01}^{\min}$ of the equation (see Lemma~\ref{Lemdbmonot})
\begin{equation}\label{Eqp01min}
2p_{01}^{\rm min}\log3+2h(p_{01}^{\rm min})=1.
\end{equation}
Numerically, $p_{01}^{\rm min}\approx0.06$ ($p_{01}^{\rm min}>0.06$). Consequently, the probability of a double click in the case of imperfect detection is lower bounded by $\eta p_{01}^{\min}$ since at most the fraction $1-\eta$ of double clicks may not occur due to the detection loss. Hence,
\begin{equation}
t_{3+}\leq\frac{p_{01}}{\eta p_{01}^{\min}}
\end{equation}
(see Proposition~\ref{Propdb} for details) and, in view of Ineq.~(\ref{Eqtnineq}),
\begin{equation}\label{Eqpdet3u}
p_{\rm det}^{(3+)}\leq t_{3+}\leq \frac{p_{01}}{\eta p_{01}^{\rm min}}
=
p_{\rm det}^{(3+),\rm U}.
\end{equation}

(II)~~If a pulse contains only two photons, then the mean  probability of a double click may be zero. Namely, the two-photon Bell state
\begin{eqnarray}
\ket{\Phi^+}&=&\frac1{\sqrt2}(\ket{2,0}_z+\ket{0,2}_z)\nonumber\\
&=&\frac1{\sqrt2}(\ket{2,0}_x+\ket{0,2}_x)\label{EqBellPhi}
\end{eqnarray}
produces no double clicks in both bases. So, estimation of the fraction of two-photon pulses requires a separate analysis. 

The intuition behind the analysis is as follows. If the number of double clicks is small, then the two-photon part of the Bob's state is close to the pure state $\ket{\Phi^+}$. So,  Bob's subsystem is almost uncorrelated with  Alice's one and the QBER is close to 1/2. So, the double click rate impose restrictions to $q_2$ and also to $p_1^{(2)}$. The following estimates are satisfied (based mainly on Proposition~\ref{PropQ2}):

\begin{equation}\label{Eqq2L}
q_2\geq q_2^{\rm L}=
\max\left[
\frac{1+\theta_2/\eta}4p_{\rm det}^{(2)}-
\sqrt{\frac{2\theta_2p_{01}p_{\rm det}^{(2)}}{\eta^3}},\,
0\right]
\end{equation}
and
\begin{equation}\label{Eqp12U}
p_1^{(2)}\leq p_1^{(2),\rm U}=
\min\left[
\frac{p_{\rm det}^{(2)}}2+
\sqrt{\frac{2\theta_2p_{01}p_{\rm det}^{(2)}}{\eta}},
\,
p_{\rm det}^{(2)}
\right].
\end{equation}

The following bounds follow from Eqs.~(\ref{EqDecompositions}) and basic bounds (\ref{Eqpdet3u}), (\ref{Eqq2L}), and (\ref{Eqp12U}). Firstly, we will need a lower bound for $p_{\rm det}^{(1)}$ since it determines the number of positions in the sifted key treated as ``secure''. That is, Eve cannot obtain information about  such positions without introducing errors, while all multiphoton positions are treated as insecure. It turns out that
\begin{equation}\label{Eqpdet1l}
p_{\rm det}^{(1)}\geq p_{\rm det}^{(1),\rm L}=
p_{\rm det}-p_{\rm det}^{(2)}-p_{\rm det}^{(3+),\rm U},
\end{equation}
Secondly, Eve's knowledge on the single-photon positions can be estimated from below using the upper bound on the fraction of errors $q_1/t_1$. This is a modification of the standard intuition behind the BB84 protocol: Eavesdropping in the $z$ basis  leads to errors in the $x$ basis. Hence, we should estimate the numerator from above and the denominator from below. Formally, this can be understood from Theorem~\ref{ThMain} and Eqs.~(\ref{EqMain}) and~(\ref{EqDeltaxl}) below. Strictly speaking, they do not contain the fraction $q_1/t_1$, but, nevertheless, we can see that the relation between $q_1$ and $t_1$ defines the estimation of the secret key rate. A lower bound for the secret key rate requires an upper bound for $q_1$ and a lower bound for $t_1$. We have
\begin{equation}\label{Eqq1u}
q_1\leq q_1^{\rm U}=q-q_2^{\rm L}
\end{equation}
and [see Eq.~(\ref{Eqt1})]
\begin{equation}\label{Eqt1Lderiv}
t_1\geq \tilde t_1^{\rm L}=p_{\rm det}^{(1)}+
\tilde p^{(1),\rm L}_1\left(\frac1\eta-1\right).
\end{equation}
where 
\begin{equation}\label{Eqp11Lderiv}
p_1^{(1)}\geq \tilde p_1^{(1),\rm L}=
p_1-p_1^{(2),\rm U}-p_{\rm det}^{(3+)}.
\end{equation}
The replacement of $p_{\rm det}^{(3+)}$ by its upper bound $p_{\rm det}^{(3+),\rm U}$ in Ineq.~(\ref{Eqp11Lderiv}) gives the following bounds:
\begin{equation}\label{Eqp11L}
p_1^{(1)}\geq p_1^{(1),\rm L}=p_1-p_1^{(2),\rm U}-p_{\rm det}^{(3+),\rm U}
\end{equation}
and
\begin{equation}\label{Eqt1l}
t_1\geq t_1^{\rm L}=p_{\rm det}^{(1),\rm L}
+p_1^{(1),\rm L}\left(\frac1{\eta}-1\right).
\end{equation}

Also, we will use an upper bound on the two-photon contribution to the probability of detection $p_{\rm det}^{(2),\rm U}$. It is defined as the maximal value $p_{\rm det}^{(2)}$ such that $p_{\rm det}^{(1),\rm L}\geq0$ (under the assumption $p_{\rm det}>p_{\rm det}^{(3+),\rm U}$) and $\delta_x^{\rm L}\leq1$, where $\delta_x^{\rm L}$ is defined in Eq.~(\ref{EqDeltaxl}) below.

Let us recall that here we assume detector~1 to be the less efficient detector. In the opposite case, the roles of the outcomes 0 and 1 (in both bases) should be swapped over. In particular, the quantities $p_1$, $p_1^{(2),\rm U}$, and $p_1^{(1),\rm L}$ should be replaced by the corresponding quantities of the outcome~0.

\begin{theorem}\label{ThMain}
Suppose that $p_{\rm det}>p_{\rm det}^{(3+),\rm U}$ and $q_1^{\rm U}<t_1^{\rm L}/2$ for all $p_{\rm det}^{(2)}\in\left[0,p_{\rm det}^{(2),\rm U}\right]$. Then the secret key rate (\ref{EqDWmin}) subject to constraints (\ref{EqGamma1})--(\ref{EqGamma4}) is lower bounded by

\begin{equation}\label{EqMain}
K\geq \min_{p_{\rm det}^{(2)}} 
p^{(1),\rm L}_{\rm det}
\left[
1-h\left(\frac{1-\delta_x^{\rm L}}2\right)\right]
-p_{\rm det}h(Q_z),
\end{equation}
where
\begin{equation}\label{EqDeltaxl}
\delta_x^{\rm L}=
\frac{\sqrt\eta(t_1^{\rm L}-2q_1^{\rm U})}{p^{(1),\rm L}_{\rm det}}.
\end{equation}
The minimization is performed over the segment $p_{\rm det}^{(2)}\in\left[0,p_{\rm det}^{(2),\rm U}\right]$. The expression under minimization in Ineq.~(\ref{EqMain}) is a convex function of $p_{\rm det}^{(2)}$.
\end{theorem}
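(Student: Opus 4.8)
The plan is to exploit the block-diagonal structure of every relevant operator in Bob's total photon number and reduce the estimate to a single-photon qubit problem, treating the multiphoton sectors as a controllable loss of provable key. First I would observe that $G$ of (\ref{EqG}), the POVM elements (\ref{EqPOVMtild}), and the constraint operators $\Gamma_i$ all commute with the projections onto the $n$-photon subspaces of $\mathcal H_B$; hence the post-selected state and every observable split into orthogonal sums over the photon number $n\ge1$ (the vacuum is annihilated by $G$). Since the photon number is a classical label inside $\mathcal H_B$, the conditional entropy is block-additive, and writing $p_{\rm det}=\sum_{n\ge1}p_{\rm det}^{(n)}$ I would obtain
\begin{equation*}
p_{\rm det}\bigl[1-H(X|B)_{\tilde\rho'}\bigr]=\sum_{n\ge1}p_{\rm det}^{(n)}\bigl[1-H(X|B)_{\tilde\rho^{\prime(n)}}\bigr]\ge p_{\rm det}^{(1)}\bigl[1-H(X|B)_{\tilde\rho^{\prime(1)}}\bigr],
\end{equation*}
discarding the nonnegative $n\ge2$ terms and carrying the directly observed error-correction term $-p_{\rm det}h(Q_z)$ of (\ref{EqDWmin}) along unchanged. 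This is the decoy-state--type step: only genuine single photons are credited with key.

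On the single-photon sector $\mathcal H_B$ is the qubit $\mathrm{span}(\ket0,\ket1)$, where $G$ acts as $G^{(1)}=\mathrm{diag}(1,\sqrt\eta)$. I would bound $H(X|B)_{\tilde\rho^{\prime(1)}}\le h(e_x^{(1)})$ by Fano's inequality for Bob's $x$-basis measurement, with phase error $e_x^{(1)}=(1-\delta_x^{(1)})/2$, and then establish, for the $x$-parity observable $X_B=\ket+\bra+-\ket-\bra-$ and likewise $X_A$, the identity
\begin{equation*}
p_{\rm det}^{(1)}\delta_x^{(1)}=\Tr\bigl[\rho^{\prime(1)}(X_A\otimes X_B)\bigr]=\sqrt\eta\,\Tr\bigl[\rho^{(1)}(X_A\otimes X_B)\bigr]=\sqrt\eta\,(t_1-2q_1),
\end{equation*}
using $G^{(1)}X_BG^{(1)}=\sqrt\eta\,X_B$; here $t_1=\Tr\rho^{(1)}$ is the single-photon weight (the no-mismatch detection probability) and $q_1$ the single-photon part of the weighted error (\ref{EqGamma2}). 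Because $f(p,\delta)=p[1-h(\tfrac{1-\delta}2)]$ is nondecreasing in each of $p\ge0$ and $\delta\in[0,1]$, it then suffices to furnish valid bounds $p_{\rm det}^{(1)}\ge p_{\rm det}^{(1),\rm L}$ and $\delta_x^{(1)}\ge\delta_x^{\rm L}$; the hypothesis $q_1^{\rm U}<t_1^{\rm L}/2$ is exactly what makes $\delta_x^{\rm L}>0$.

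The substance of the proof is to control the unobserved multiphoton contamination of $p_{\rm det}^{(1)}$, $t_1$, $q_1$ through the single observable double-click rate $p_{\rm dc}$. For $n\ge3$ I would invoke Proposition~\ref{Propdb}, which lower-bounds the double-click contribution of each such sector by $\eta p_{01}^{\rm min}$ times its detection contribution; summing gives $p_{\rm det}^{(3+)}\le p_{\rm dc}/(\eta p_{01}^{\rm min})=p_{\rm det}^{(3+),\rm U}$. On the two-photon sector, whose symmetric space is $\mathrm{span}(\ket{2,0}_b,\ket{1,1}_b,\ket{0,2}_b)$, the double-click operator is $\eta\ket{1,1}_b\bra{1,1}$, so a bound on $p_{\rm dc}$ limits the population of $\ket{1,1}_b$; combining this with $\rho^{(2)}\ge0$ through Cauchy--Schwarz estimates on the off-diagonal blocks yields $p_1^{(2),\rm U}$ and $q_2^{\rm L}$, and subtracting the two- and three-plus-photon bounds from the observables produces $p_1^{(1),\rm L}$, $t_1^{\rm L}$, $q_1^{\rm U}$ and finally $p_{\rm det}^{(1),\rm L}$, $\delta_x^{\rm L}$ of (\ref{Eqpdet1l})--(\ref{EqDeltaxl}), designed so that $\delta_x^{(1)}\ge\delta_x^{\rm L}$. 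With both bounds in hand, monotonicity of $f$ gives $f(p_{\rm det}^{(1)},\delta_x^{(1)})\ge f(p_{\rm det}^{(1),\rm L},\delta_x^{\rm L})$ at the true (unknown) $p_{\rm det}^{(2)}$; minimizing over $p_{\rm det}^{(2)}\in[0,p_{\rm det}^{(2),\rm U}]$ then yields (\ref{EqMain}) uniformly over attacks, and the stated convexity I would check by twice differentiating the integrand in $p_{\rm det}^{(2)}$.

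I expect the main obstacle to be precisely this multiphoton step. The three-plus case rests entirely on Proposition~\ref{Propdb}, the rigorous entropic-uncertainty substitute for the numerical conjecture of Ref.~\cite{Zhang}. The two-photon Cauchy--Schwarz estimates are more delicate than they appear, because they must be sharp enough that $\delta_x^{\rm L}$ stays a genuine lower bound on $\delta_x^{(1)}$: the single-photon detection $p_{\rm det}^{(1)}$ enters $\delta_x^{(1)}=\sqrt\eta(t_1-2q_1)/p_{\rm det}^{(1)}$ through both the numerator (via $t_1$) and the denominator, so one cannot simply insert independent one-sided bounds on $t_1$, $q_1$ and $p_{\rm det}^{(1)}$. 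The estimate has to be organized around the physical inequality $t_1\ge p_{\rm det}^{(1)}$, which couples numerator and denominator and is what ultimately forces the worst case onto the boundary captured by the outer minimization over $p_{\rm det}^{(2)}$.
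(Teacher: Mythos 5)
Your architecture coincides with the paper's: reduce to states block-diagonal in Bob's photon number, credit only the single-photon sector with key, control the $n\ge3$ sectors via Proposition~\ref{Propdb}, control the two-photon sector by closeness to $\ket{\Phi^+}$, and minimize over the unobserved $p_{\rm det}^{(2)}$. Within this common skeleton, your single-photon step is a genuine (and nicer) shortcut: Fano plus data processing, combined with the identity $G_1X_BG_1=\sqrt\eta\,X_B$ and the observation that the $\eta^{-1}$ weight in $\Gamma_2$ of Eq.~(\ref{EqGamma2}) exactly cancels the efficiency factor in $P_1^{(x)}$ on the single-photon subspace, reproduces the simplified bound (\ref{EqHXB1}) with $\delta_x$ of Eq.~(\ref{EqDelta}) directly, whereas the paper's Proposition~\ref{PropSingle} goes through a $Z\otimes Z$ symmetrization and explicit diagonalization (which also yields the tight bound (\ref{EqHXB1tight}), not needed for the theorem). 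Your use of block-additivity of $H(X|B)$ in place of the paper's joint-convexity estimate is also fine.

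However, three gaps remain, of increasing severity. (i) ``The observables commute with the photon-number projections, hence the post-selected state splits'' is a non sequitur: Eve's $\rho_{AB}$ may carry coherences between photon-number sectors, and $H(X|B)$ depends on them. You need the argument of Proposition~\ref{PropSym}: pinching Bob's system with $\Phi_1$ can only increase $H(X|B)$ (monotonicity of relative entropy under channels on $B$) and leaves all constraints invariant, so the worst case may be taken block-diagonal. This is standard and easily repaired. (ii) The two-photon bounds $p_1^{(2),\rm U}$ and $q_2^{\rm L}$ are asserted (``Cauchy--Schwarz estimates on the off-diagonal blocks'') but never derived; in the paper this is Proposition~\ref{PropQ2}, proved by a fidelity/trace-distance argument---few double clicks in both bases force $\rho_B^{(2)}$ close to $\ket{\Phi^+}$, giving Ineqs.~(\ref{Eqq2}) and~(\ref{Eqp12})---and the specific constants in (\ref{EqMain}) come from exactly this computation, so it cannot be waved at. (iii) Most seriously, the step you yourself call ``the main obstacle'' is left unresolved, and it is precisely where the proof must do work: separate monotonicity of $p[1-h((1-\delta)/2)]$ in $p$ and $\delta$ does give $f(p_{\rm det}^{(1)},\delta_x)\ge f(p_{\rm det}^{(1),\rm L},\delta_x^{\rm L})$ \emph{once} both $p_{\rm det}^{(1)}\ge p_{\rm det}^{(1),\rm L}$ and $\delta_x\ge\delta_x^{\rm L}$ are established, but $\delta_x\ge\delta_x^{\rm L}$ does not follow from one-sided bounds on $t_1$, $q_1$ and $p_{\rm det}^{(1)}$ inserted independently, because $p_{\rm det}^{(1)}$ sits in the denominator of Eq.~(\ref{EqDelta}): enlarging numerator and denominator simultaneously leaves the ratio, and hence the product $p_{\rm det}^{(1)}[1-h((1-\delta_x)/2)]$, uncontrolled. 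The paper closes this by studying the dependence on $p_{\rm det}^{(1)}$ at fixed $t_1,q_1$ (the discussion following Eq.~(\ref{EqH1neq1new})), exploiting the coupling $t_1=p_{\rm det}^{(1)}+(1/\eta-1)p_1^{(1)}$ that you correctly identify; without an argument of this type your chain of inequalities does not yield (\ref{EqMain}). Finally, the convexity claim, which you defer to ``twice differentiating,'' is proved in the paper by the clean combination argument of Lemma~\ref{LemConcave} (if $g$ is convex then $x\,h(\tfrac12-g(x)/x)$ is concave). As it stands, your text is a correct plan with the right ingredients, but not yet a proof.
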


The requirement $q_1^{\rm U}<t_1^{\rm L}/2$ essentially means that the error rate is not too high.  If this condition is not met, Theorem~\ref{ThMain} cannot guarantee a positive secret key rate. If the condition $p_{\rm det}>p_{\rm det}^{(3+),\rm U}$ is not met, then our analysis cannot exclude the situation where all pulses arriving at the Bob's side contain three or more photons, which are treated as insecure. Hence, Theorem~\ref{ThMain} also cannot guarantee a positive secret key rate in this case.

\begin{remark}
It may seem counterintuitive that, in the estimation of the phase error rate (the argument of $h$ in Eq.~(\ref{EqMain})), we use the single-photon detection rate in the $z$ basis ($p^{(1)}_{\rm det}$), not in the $x$ basis. Though this is a rigorous result, we can give an intuition in favor of this: As we say in Remark~\ref{RemPhaseErr}, we have to estimate the phase error rate for the state attenuated in the $z$ basis used for key generation. This is the reason why the detection rate in the $z$ basis emerges in this estimate for the phase error rate.
\end{remark}

The calculations of the secret key rate with formula~(\ref{EqMain}) are presented on Fig.~\ref{Fig1}. For the simulation, we assume that
\begin{equation}\label{EqDep}
\rho_{AB}=(1-2Q)\ket{\Phi^+}\bra{\Phi^+}+2Q\frac{I_2}2\otimes\frac{I_2}2,
\end{equation}
where $0\leq Q\leq1/2$ and $I_2$ is an identity operator in the qubit space. In the case of perfect detection, $Q$ is the probability of error for both bases. The right-hand side of Eq.~(\ref{EqDep}) is the action of the depolarizing channel on the maximally entangled state given by Eq.~(\ref{EqPhi}). The depolarizing channel is a commonly used model for an actual transmission line (i.e., for the ``honest'' performance of the protocol, without  eavesdropping). The state given by Eq.~(\ref{EqDep}) gives $p_{\rm det}=(1+\eta)/2$, $p_1=\eta/2$, and $q=Q$. Also, we artificially set $p_{01}=10^{-5}$ due to dark counts. Strictly speaking, dark counts should be taken into account explicitly in the detection model, i.e., in Eqs.~(\ref{EqPOVM}) and~(\ref{EqPOVMtild}). This will be a subject for a future work. We emphasize that the security proof does not rely on a specific model (\ref{EqDep}) used for the simulation.

As an upper bound for the secret key rate, we can use a tight bound for a single-photon Bob's input, which was obtained obtained in Ref.~\cite{Bochkov} (a particular case of this bound was also obtained in Ref.~\cite{MaZhou}), see also Proposition~\ref{PropSingle} below with a simplified proof:
\begin{eqnarray}
K&=&p_{\rm det}\left[h\left(\frac{1-\delta_z}2\right)-
h\left(
\frac{1-\sqrt{\delta_x^2+\delta_z^2}}2
\right)\right]\nonumber\\
&-&p_{\rm det}h(Q_z),\label{EqK1tight}
\\
K&\geq&p_{\rm det}\left[1-
h\left(
\frac{1-\delta_x}2
\right)\right]
-p_{\rm det}h(Q_z).\label{EqK1}
\end{eqnarray}
where $\delta_x$ and $\delta_z$ are given by Eqs.~(\ref{EqDelta}) below. Formula~(\ref{EqK1tight}) is the exact value of the secret key rate in the case of the additional restriction $\rho_{AB}\in\mathfrak T(\mathbb C^2\otimes\mathbb C^2)$, i.e., if the Bob's input is  single photon. Formula~(\ref{EqK1}) gives a simplified and more rough bound. We use it (rather than Eq.~(\ref{EqK1tight})) for the derivation of Eq.~(\ref{EqMain}). From Fig.~\ref{Fig1}, we see that, in the practical case of small detection-efficiency mismatch ($\eta$ close to~1), formula (\ref{EqMain}) gives a secret key rate close to  both Eqs.~(\ref{EqK1tight}) and~(\ref{EqK1}).

\begin{figure}[t]
\begin{centering}
\includegraphics[width=1\columnwidth]{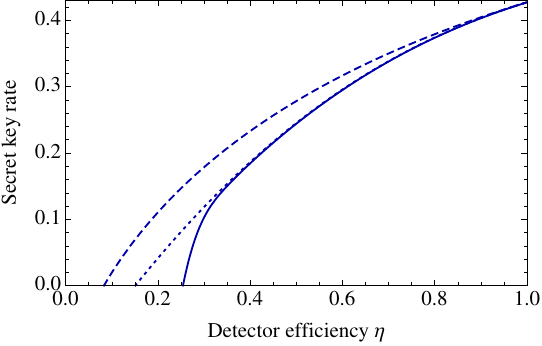}
\end{centering}
\vskip -4mm
\caption
{
Secret key of the BB84 protocol vs the efficiency of one of the detectors $\eta$. Another detector and the transmission line  are assumed to be perfect; otherwise, the secret key rate is reduced by a constant factor. The state $\rho_{AB}$ is given by Eq.~(\ref{EqDep}) with $Q=0.05$. This corresponds to $p_{\rm det}=(1+\eta)/2$, $p_1=\eta/2$, and $q=Q$. Also, we artificially set $p_{01}=10^{-5}$ due to dark counts. Solid line: formula (\ref{EqMain}). Dashed line: tight bound (\ref{EqK1tight}) for the single-photon case (used as an upper bound for the secret key rate). Dotted line: simplified (more rough) bound (\ref{EqK1}) for the single-photon case, which was used in the derivation of formula (\ref{EqMain}). We see that, for the practical case of small detection-efficiency mismatch ($\eta$ close to~1), formula (\ref{EqMain}) gives a secret key rate close to the upper bound.
}
\label{Fig1}
\end{figure}

The decrease of the secret key rate with the decrease of $\eta$ shown on Figs.~\ref{Fig1} is caused by two effects: the decrease of the average detector efficiency $(1+\eta)/2$ and detection-efficiency mismatch as such. To distinguish the influence of the mismatch as such, we compare the secret key rates for the mismatch case with the detector efficiencies 1 and $\eta$ and the no-mismatch case with both efficiencies equal to $(1+\eta)/2$. The secret key rate for the latter case is well-known and given by 
\begin{equation}\label{EqIdeal}
K=p_{\rm det}[1-2h(Q)]. 
\end{equation}
The ratio of the secret key rate in the mismatch case to that in the no-mismatch case is shown on Fig.~\ref{Fig2}. The solid line corresponds to the errorless case $Q=0$ and the dashed line corresponds to a high error rate $Q=0.09$. Recall that $Q\approx0.11$ is a maximal QBER for which the key distribution is possible ($1-2h(Q)>0$). The QBER $Q=0.09$ is high in the sense that it is close to the critical one $Q\approx0.11$.

We see that, first, the influence of mismatch on the secret key rate is larger for high QBERs and, second, if the mismatch is not very large, then the decrease of the secret key rate is also relatively small even for high QBERs. For example, the secret key rate for $\eta=0.8$ and $Q=0.09$ is above 90\% of the secret key rate for the no-mismatch case with the same $Q$ and average efficiency.

\begin{figure}[t]
\begin{centering}
\includegraphics[width=1\columnwidth]{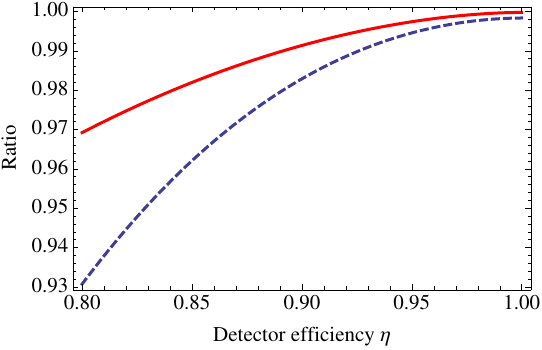}
\end{centering}
\vskip -4mm
\caption
{
Decrease of secret key rate in the detection efficiency-mismatch case with respect to the no-mismatch case: The ratio of the secret key rate in the mismatch case with the detector efficiencies 1 and $\eta$ to the secret key rate in the no-mismatch case with both  efficiencies equal to $(1+\eta)/2$. Solid line: no errors $Q=0$, dashed line: relatively high error rate $Q=0.09$ (close to the critical value for the case of perfect detection $Q\approx0.11$). All other parameters are the same as on Fig.~\ref{Fig1}. If the mismatch is small, then the decrease of secret key rate is  relatively small even for high QBERs.
}
\label{Fig2}
\end{figure}

\section{Proof of Theorem~\ref{ThMain}}\label{SecProof}

\subsection{General facts}

\begin{proposition}\label{PropSym}
Let $\Phi$ be a positive trace-preserving linear map acting on
$\mathfrak T(\mathcal H_A\otimes\mathcal H_B)$ that commutes with $\mathcal X\otimes{\rm Id}_{BE}$ and $\mathcal G$ and satisfies $\Phi^\dag(\Gamma_i)=\Gamma_i$ for all $i$. We assume that $p_{\rm det}$ given by (\ref{EqGamma1}) is included in the set of restrictions. Then
\begin{equation}\label{EqRestricH}
\sup_{\rho_{AB}\in\mathbf S}H(X|B)_{\tilde\rho'}
=
\sup_{\rho_{AB}\in\mathbf S'}H(X|B)_{\tilde\rho'},
\end{equation}
where $\mathbf S'=\mathbf S\cap{\rm Image}(\Phi)$.
\end{proposition}

Recall that the map $\mathcal X$ was given in Eq.~(\ref{EqZX}) and describes Alice's measurement in the $x$ basis. The dual map $\Phi^\dag$ is defined by the equality $\Tr[\Phi(\rho)A]=\Tr[\rho\,\Phi^\dag(A)]$ for an arbitrary trace-class operator $\rho$ and an arbitrary bounded operator $A$ \cite{Holevo}.

\begin{proof}
Denote $\rho'_{XB}=\mathcal X(\rho'_{AB})$ and, as before, $\tilde\rho'_{XB}=p_{\rm det}^{-1}\mathcal X(\rho'_{AB})$. We have
\begin{equation}\label{EqSymReducDeriv}
\begin{split}
-p_{\rm det}&H(X|B)_{\tilde\rho'}
=
p_{\rm det}D(\tilde\rho'_{XB}\|I_A\otimes\tilde\rho'_{B})\\
&=
D(\rho'_{XB}\|I_A\otimes\rho'_{B})\\
&=
D(\mathcal X(\mathcal G(\rho_{AB}))\|I_A\otimes\mathcal G(\rho_{B}))\\
&\geq
D(\Phi(\mathcal X(\mathcal G(\rho_{AB})))\|
\Phi(I_A\otimes\mathcal G(\rho_{B})))\\
&=
D(\mathcal X(\mathcal G(\Phi(\rho_{AB})))\|
I_A\otimes\mathcal G(\Phi(\rho_{B})))\\
&=-p_{\rm det}H(X|B)_{\Phi(\tilde\rho')}
\end{split}
\end{equation}
Here we have used monotonicity of quantum relative entropy under the action of a positive trace-preserving linear map on both arguments \cite{MullerHermesReeb}. Ineq.~(\ref{EqSymReducDeriv}) means that the substitution of $\rho_{AB}$ by $\Phi(\rho_{AB})$ does not decrease $H(X|B)$. Since $\Phi^\dag(\Gamma_i)=\Gamma_i$ for all $i$, $\Phi(\rho_{AB})$ satisfies all restrictions whenever $\rho_{AB}$ does. Also, we have used that $p_{\rm det}$ is also included in the set of restrictions and, hence, is not changed under the action of $\Phi$. Thus, for each $\rho_{AB}\in\mathbf S$, there exists $\Phi(\rho_{AB})\in\mathbf S'$ with the same or a greater value of the objective function. However, the supremum over $\mathbf S'$ cannot exceed the supremum over $\mathbf S$ because $\mathbf S'$ is a subset of $\mathbf S$. Hence, the supremum of $H(X|B)$ on $\mathbf S'$ coincides with that of on $\mathbf S$.
\end{proof}
Note that if $\Phi$ is a projector, i.e. $\Phi^2=\Phi$, then $\mathbf S'$ can be rewritten as
\begin{equation}\label{EqSprimeProj}
\mathbf S'=\mathbf S\cap\{\rho_{AB}\,\|\,\Phi(\rho_{AB})=\rho_{AB}\}.
\end{equation}
This means that the optimization can be performed only over the states that are invariant with respect to the map $\Phi$.



Let us define the  decoherence map with respect to the total number of photons: 
\begin{equation}\label{EqPhi1}
\Phi_1(\rho)=\sum_{n=0}^\infty\Pi_n\rho\Pi_n, 
\end{equation}
where
\begin{equation*}
\begin{split}
\Pi_n&=\sum_{k=0}^n
\ket{k,n-k}_z\bra{k,n-k}\\&=
\sum_{k=0}^n
\ket{k,n-k}_x\bra{k,n-k}.
\end{split}
\end{equation*}
$\Phi_1$ satisfies the conditions of Proposition~\ref{PropSym} for the constraints (\ref{EqGamma1})--(\ref{EqGamma4}) and $\Phi_1$ is a projector. Hence, we can take $\rho_{AB}$ invariant under this map: $\rho_{AB}=\Phi_1(\rho_{AB})$. Thus, without loss of generality, we assume that $\rho_{AB}$ and $\rho_{XB}=\mathcal X(\rho_{AB})$ are mixtures of  contributions with a certain number of photons: 

\begin{equation}\label{EqRhoSumN}
\rho_{XB}=\sum_{n=0}^\infty\rho^{(n)}_{XB},
\quad\rho_{XB}^{(n)}\in
\mathfrak T(\mathbb C^2\otimes(\mathbb C^2)^{\otimes_s n}).
\end{equation}
This is a formal proof of the observation in Ref.~\cite{EntanglVerif} that, without loss of generality, we can assume that a quantum non-demolition measurement of the  number of photons is performed before the actual measurement (\ref{EqPOVM}). Eq.~(\ref{EqRhoSumN}) allows us to analyze each $n$-photon part separately.

Since the state $\rho_{XB}$ is diagonal in the $x$ basis of Alice's space, we have
\begin{equation}\label{EqRhoXB}
\rho_{XB}=\frac12\left(
\ket+\bra+
\otimes\rho^{(+)}_B
+
\ket-\bra-
\otimes\rho^{(-)}_B\right),
\end{equation}
where $\rho^{(\pm)}_{XB}$ are again mixtures of contributions with a certain number of photons: 
\begin{equation}\label{EqRhoBsignSumN}
\rho^{(\pm)}_{XB}=\sum_{n=0}^\infty\rho^{(n,\pm)}_{XB},
\quad\rho_{XB}^{(n,\pm)}\in
\mathfrak T((\mathbb C^2)^{\otimes_s n}).
\end{equation}
In particular,
\begin{equation}\label{EqRhoXB1pre}
\rho_{XB}^{(1)}=\frac12\left(
\ket+\bra+
\otimes\rho^{(1,+)}_B
+
\ket-\bra-
\otimes\rho^{(1,-)}_B\right).
\end{equation}
Then, action of the map $\mathcal G$ on $\rho_{XB}$ can be decomposed into maps acting on the terms with a certain number of photons as follows:
\begin{equation}\label{EqRhoPrimeSumN}
\rho'_{XB}=\mathcal G(\rho_{XB})=
\sum_{n=1}^\infty G_n\rho_{XB}^{(n)}G_n,
\end{equation}
\begin{equation}\label{EqGn}
\begin{split}
G_n&=\begin{pmatrix}
1& & &\\
&\ddots& &\\
& & 1 &\\
& & & \sqrt{\theta_n}
\end{pmatrix}
\\
&=I_B-(1-\sqrt{\theta_n})\ket{0,n}_z\bra{0,n},
\end{split}
\end{equation}
where
the matrix representation is given in the basis $\{\ket{k,n-k}_z\}_{k=0}^n$ of the space $(\mathbb C^2)^{\otimes_s n}$. Now we can give formal definitions of the $n$-photon contributions from decompositions (\ref{EqDecompositions}) 
\begin{equation}\label{Eqqndecompose}
t_n=\Tr\rho^{(n)}_{XB},
\quad
p_{\rm det}^{(n)}=\Tr\Gamma_1\rho^{(n)}_{XB},
\quad
q_n=\Tr\Gamma_2\rho^{(n)}_{XB}.
\end{equation}

It turns out that
\begin{equation}
p_{\rm det}H(X|B)_{\tilde\rho'}
=\sum_{n=1}^\infty p_{\rm det}^{(n)}H(X|B)_{(n)},
\end{equation}
where the conditional entropy $H(X|B)_{(n)}$ is calculated for the state 
\begin{equation}\label{EqRhoXBn}
\tilde\rho^{(n)}_{XB}=\big(p_{\rm det}^{(n)}\big)^{-1}G_n\rho_{XB}^{(n)}G_n.
\end{equation} 
Due to (\ref{EqDWmin}), we should estimate $H(X|B)_{\tilde\rho'}$ from above.  Using the trivial upper bound $H(X|B)_{(n)}\leq1$ for $n\geq2$, we obtain
\begin{equation}\label{EqH1neq1}
p_{\rm det}H(X|B)_{\tilde\rho'}\leq
p_{\rm det}^{(1)}
H(X|B)_{(1)}+\big(p_{\rm det}-p_{\rm det}^{(1)}\big).
\end{equation}

Thus, our plan is to estimate $H(X|B)_{(1)}$ from above for a single-photon part $\rho^{(1)}_{XB}$ and to estimate the required parameters of $\rho^{(1)}_{XB}$ (for example, $p_{\rm det}^{(1)}$) using the known observables for the whole state $\rho_{AB}$.

\subsection{Single-photon part}\label{SecSingle}

An analytic formula of the tight bound for the single-photon case was  obtained in \cite{Bochkov}. Here we present a simplified proof of it.

\begin{proposition}\label{PropSingle}
For any $\rho^{(1)}_{XB}\in\mathfrak T(\mathbb C^2\otimes\mathbb C^2)$, $\rho_{XB}^{(1)}\geq0$, the following inequality is satisfied
\begin{equation}\label{Eqq1l}
\frac{q_1}{t_1}\geq\frac12-\frac1{t_1}\sqrt{\frac{p^{(1)}_0p^{(1)}_1}\eta},
\end{equation}
where $p_0^{(1)}=\Tr\rho^{(1)}_{B}P_0^{(z)}=p_{\rm det}^{(1)}-p_1^{(1)}$ (the probability of click of detector~0), and
\begin{equation}\label{EqHXB1tight}
H(X|B)_{(1)}\leq
1+h\left(
\frac{1-\sqrt{\delta_x^2+\delta_z^2}}2
\right)
-h\left(\frac{1-\delta_z}2\right),
\end{equation}
where
\begin{equation}
\delta_z=\frac{p^{(1)}_0-p^{(1)}_1}{p^{(1)}_{\rm det}},\quad
\delta_x=\frac{\sqrt\eta(t_1-2q_1)}{p_{\rm det}^{(1)}}.\label{EqDelta}
\end{equation}
The right-hand side of Ineq.~(\ref{EqHXB1tight}) is a non-increasing function of $|\delta_x|$ and $|\delta_z|$. In particular, a simplified formula (obtained by setting $\delta_z=0$) is true: 
\begin{equation}\label{EqHXB1}
H(X|B)_{(1)}\leq
h\left(
\frac{1-\delta_x}2
\right).
\end{equation}
\end{proposition}

\begin{proof}
Consider the problem of maximization of $H(X|B)_{(1)}$ over $\rho^{(1)}_{XB}\in\mathfrak T(\mathbb C^2\otimes\mathbb C^2)$ with only one constraint (\ref{EqGamma1}), where $\Gamma_1$ is substituted by its single-photon version $\Pi_1\Gamma_1\Pi_1$.
Consider the unitary transformation $Z\otimes Z$, where 
\begin{equation}
Z=\ket0\bra0-\ket1\bra1=\ket-\bra++\ket+\bra-
\end{equation} 
(phase flip), and the projector
\begin{equation}\label{EqPhi2}
\Phi_2(\rho^{(1)}_{XB})=\frac12
\left[\rho^{(1)}_{XB}+(Z\otimes Z)\rho^{(1)}_{XB}(Z\otimes Z)\right].
\end{equation}
$\Phi_2$ satisfies the conditions of Proposition~\ref{PropSym}. Hence, in view of Eq.~(\ref{EqSprimeProj}), we can restrict the set of states $\rho^{(1)}_{XB}$ to those that satisfy $\Phi_2(\rho^{(1)}_{XB})=\rho^{(1)}_{XB}$, or,
\begin{equation}\label{EqZsym}
\rho^{(1)}_{XB}=(Z\otimes Z)\rho^{(1)}_{XB}(Z\otimes Z).
\end{equation}
Application of Eq.~(\ref{EqZsym}) to Eq.~(\ref{EqRhoXB1pre}) gives 
\begin{equation}
\rho_B^{(1,-)}=Z\rho_B^{(1,+)}Z.
\end{equation}
Hence,
\begin{equation}\label{EqRhoXB1}
\begin{split}
G_1\rho_{XB}^{(1)}G_1=\frac12\Big(
&\ket+\bra+
\otimes G_1\rho^{(1,+)}_BG_1+\\
&\ket-\bra-
\otimes G_1\rho^{(1,-)}_BG_1
\Big),
\end{split}
\end{equation}
where 
\begin{equation}\label{EqG1rho1Z}
G_1\rho^{(1,-)}_BG_1=ZG_1\rho^{(1,+)}G_1Z
\end{equation}
(note that $G_1$ and $Z$ commute). 

Denote $\Phi_3(\rho^{(1)}_{XB})=[\rho^{(1)}_{XB}+(\rho^{(1)}_{XB})^*]/2$, where $*$ denotes the complex conjugation of the elements of $\rho^{(1)}_{XB}$ in the $z$ basis. This map satisfies the conditions of Proposition~\ref{PropSym}. Hence, without loss of generality, the matrix elements of $\rho^{(1)}_{XB}$ in the $z$ basis are assumed to be real.

Then, the matrix representation of the operator $G_1\rho^{(1,+)}_{B}G_1$ in the $z$ basis is
\begin{equation}\label{EqRhoXB1signz}
G_1\rho^{(1,+)}_{B}G_1=\frac{p_{\rm det}^{(1)}}2
\begin{pmatrix}
1+\delta_z&&\delta_x\\
\delta_x&&1-\delta_z
\end{pmatrix},
\end{equation}
where $\delta_z$ and $\delta_x$ are some real numbers satisfying the inequality
\begin{equation}\label{EqDeltaxz}
\delta_z^2+\delta_x^2\leq1.
\end{equation} Eq. (\ref{EqRhoXB1signz}) is a general form of a real positive-semidefinite $2\times2$ matrix with the trace $p_{\rm det}^{(1)}$. The matrix representation of the same operator in the $x$ basis is
\begin{equation}\label{EqRhoXB1signx}
G_1\rho^{(1,+)}_{B}G_1=\frac{p_{\rm det}^{(1)}}2
\begin{pmatrix}
1+\delta_x&&\delta_z\\
\delta_z&&1-\delta_x
\end{pmatrix}.
\end{equation}
We see that
\begin{equation}
\frac1{p_{\rm det}^{(1)}}\Tr\left[G_1\rho^{(1,+)}_BG_1
\big(\tilde P_0^{(b)}-\tilde P_1^{(b)}\big)\right]=
\delta_b,
\end{equation}
$b\in\{x,z\}$, which explains the denotations $\delta_z$ and $\delta_x$. In other words, $\delta_b$ is the difference between the probabilities of two outcomes in the case of perfect detection for the state $G_1\rho_B^{(1,+)}G_1$, i.e., for the state attenuated by the imperfect detection in the $z$ basis. Formula for $\delta_z$ in Eqs.~(\ref{EqDelta})  is now obvious. Also, it can be straightforwardly shown that
\begin{equation}
q_1=\Tr\big(\Gamma^{(1)}_2\rho_{XB}^{(1)}\big)=
\frac12\left(t_1-\frac{p_{\rm det}^{(1)}\delta_x}{\sqrt\eta}\right),
\end{equation} 
which gives formula for $\delta_x$ in Eqs.~(\ref{EqDelta}). 

After simple algebra, Ineq.~(\ref{Eqq1l}) follows from Eqs.~(\ref{EqDeltaxz}), (\ref{EqDelta}), and  
\begin{equation}
p_{\rm det}^{(1)}=p_0^{(1)}+p_1^{(1)},
\end{equation} 
which is a particular case of Eq.~(\ref{Eqpdetp}): $p_{0+01}^{(1)}=p_0^{(1)}$ since single-photon pulses do not lead to double clicks.

Let us calculate $H(X|B)_{(1)}$. It is equal to
\begin{equation}
H(X|B)_{(1)}=H(\tilde\rho_{XB}^{(1)})-H(\tilde\rho_{B}^{(1)}),
\end{equation}
where $\tilde\rho_{B}^{(1)}=\Tr_X\tilde\rho_{XB}^{(1)}$. In view of Eqs.~(\ref{EqG1rho1Z}) and~(\ref{EqRhoXB1signz}), we have
\begin{equation}\label{EqRhoXB1signzminus}
G_1\rho^{(1,-)}_BG_1=\frac{p_{\rm det}^{(1)}}2
\begin{pmatrix}
1+\delta_z&&-\delta_x\\
-\delta_x&&1-\delta_z
\end{pmatrix}.
\end{equation}
Hence, in view of Eqs.~(\ref{EqRhoXB1}), (\ref{EqRhoXB1signz}), and (\ref{EqRhoXB1signzminus}), the state $\tilde\rho_{XB}^{(1)}$ [see Eq.~(\ref{EqRhoXBn})] has two doubly degenerate eigenvalues $(1\pm\sqrt{\delta_z^2+\delta_x^2})/4$. Thus,
\begin{equation}
H(\tilde\rho_{XB}^{(1)})=1+h\left(
\frac{1-\sqrt{\delta_x^2+\delta_z^2}}2
\right),
\end{equation}
where $H(\rho)$ denotes the von Neumann entropy of the state $\rho$.
It turns out that
\begin{equation}
\tilde\rho^{(1)}_B=\frac{1}2
\begin{pmatrix}
1+\delta_z&&0\\
0&&1-\delta_z
\end{pmatrix}
\end{equation}
and 
\begin{equation}
H(\tilde\rho_{B}^{(1)})=h\left(
\frac{1-\delta_z}2
\right).
\end{equation}
Hence,
\begin{equation}\label{EqHXB1tighteq}
H(X|B)_{(1)}=
1+h\left(
\frac{1-\sqrt{\delta_x^2+\delta_z^2}}2
\right)
-h\left(\frac{1-\delta_z}2\right)
\end{equation}
for state (\ref{EqRhoXB1})--(\ref{EqRhoXB1signz}). Generally, we have Ineq.~(\ref{EqHXB1tight}) since, to obtain state  (\ref{EqRhoXB1})--(\ref{EqRhoXB1signz}) from the general state, we performed two positive maps $\Phi_2$ and $\Phi_3$, which, in general, increase $H(X|B)_{(1)}$. 

To proof the monotonicity of the right-hand side of Ineq.~(\ref{EqHXB1tight}) with respect to $|\delta_z|$ and $|\delta_x|$, we again (like in the proof of Proposition~\ref{PropSym}), express $H(X|B)_{(1)}$ in terms of the quantum relative entropy:
\begin{equation}\label{EqHXBD}
H(X|B)_{(1)}=-D(\tilde\rho_{XB}^{(1)}\|I_A\otimes \tilde\rho_{B}^{(1)}).
\end{equation}
Let us apply the completely positive map
\begin{equation}\label{EqDecohz}
\tilde\rho_{XB}^{(1)}\mapsto
\Big(1-\frac\alpha2\Big)\tilde\rho_{XB}^{(1)}
+\frac\alpha2(I\otimes Z)\tilde\rho_{XB}^{(1)}(I\otimes Z)
\end{equation}
or
\begin{equation}
\tilde\rho_{XB}^{(1)}\mapsto
\Big(1-\frac\alpha2\Big)\tilde\rho_{XB}^{(1)}
+\frac\alpha2(I\otimes X)\tilde\rho_{XB}^{(1)}(I\otimes X),
\end{equation}
$0\leq\alpha\leq1$, to both arguments of the relative entropy in (\ref{EqHXBD}). The relative entropy cannot increase or, equivalently, the conditional entropy cannot decrease under such action. Such maps corresponding to the substitutions
\begin{equation}
\delta_z\mapsto (1-\alpha)\delta_z
\end{equation}
and
\begin{equation}
\delta_x\mapsto (1-\alpha)\delta_x,
\end{equation}
respectively (i.e., to a partial decoherence in the bases $z$ and $x$, respectively). 

Hence, the right-hand side of Eq.~(\ref{EqHXB1tighteq}) is a non-decreasing function of $|\delta_z|$ and $|\delta_x|$. Setting $\alpha=1$ in map (\ref{EqDecohz}) gives $\delta_z=0$, corresponds to the full decoherence in the $z$ basis, and yields the right-hand side of Ineq.~(\ref{EqHXB1}).
\end{proof}

Thus, Eq.~(\ref{EqHXB1tight}) is a tight bound, while Eq.~(\ref{EqHXB1}) is a simplified and more rough one. In the latter case, $(1-\delta_x)/p^{(1)}_{\rm det}$ can be identified with the phase error rate, which plays a crucial role in  security proofs of QKD \cite{ShorPreskill,Koashi}. As in other cases of practical imperfections, it is not equal to the bit error rate for the $x$ basis \cite{GLLP}.

We will use a simplified formula (\ref{EqHXB1}) in our work since its deviation from a tight bound (\ref{EqHXB1tight}) is small for $\eta$ close to 1, which takes place in practice. The substitution of Ineq.~(\ref{EqHXB1}) into Ineq.~(\ref{EqH1neq1}) gives
\begin{equation}\label{EqH1neq1new}
p_{\rm det}H(X|B)_{\tilde\rho'}\leq
p_{\rm det}^{(1)}
h\left(
\frac{1-\delta_x}2
\right)
+\big(p_{\rm det}-p_{\rm det}^{(1)}\big).
\end{equation}
We need to estimate $\delta_x$ from below. Hence, in view of Eq.~(\ref{EqDelta}), we should estimate $t_1$ from below and $q_1$ from above. The estimation of $t_1$ will include $p_{\rm det}^{(1)}$ (see Eq.~(\ref{Eqt1l})), hence we postpone the investigation of the dependence of Ineq.~(\ref{EqH1neq1new}) on $p_{\rm det}^{(1)}$ until we obtain all estimates in Sec.~\ref{SecFin}.

\begin{remark}\label{RemSym}
In Eq.~(\ref{EqPhi2}), we used a special form of projector. More generally, consider a unitary representation  $\{U_g\}_{g\in\mathbf G}$ of a finite group $\mathbf G$ of the order $|\mathbf G|$, where each unitary operator $U_g$ acts on $\mathcal H_A\otimes\mathcal H_B$. If $\Phi_g(\rho_{AB})=U_g\rho_{AB}U_g^\dag$ for all $g$ satisfy the conditions of Proposition~\ref{PropSym}, then Eq.~(\ref{EqRestricH}) with
\begin{equation}\label{EqSprimeU}
\mathbf S'=\mathbf S\cap\{\rho_{AB}\,\|\,U_g\rho_{AB}U_g^\dag=\rho_{AB}
\text{ for all } g\}
\end{equation}
holds. So, the optimization can be performed only over the states that are invariant with respect to all unitary maps. Indeed, consider the map
\begin{equation}\label{EqPhiMean}
\Phi(\rho_{AB})=\frac1{|\mathbf G|}\sum_{g\in\mathbf G}U_g\rho_{AB}U_g^\dag.
\end{equation}
It satisfies the conditions of Proposition~\ref{PropSym}. Since $\mathbf G$ is a group, the right-hand side of Eq.~(\ref{EqPhiMean}) is invariant under the action of each $U_g$. Hence, $\Phi$ is a projector and  Eq.~(\ref{EqRestricH}) with $\mathbf S'$ given by Eq.~(\ref{EqSprimeProj}) holds.  From the other side, the invariance under the action of each $U_g$ yields the invariance under the action of $\Phi$. Hence, Eqs.~(\ref{EqSprimeProj}) and (\ref{EqSprimeU}) are equivalent for this choice of $\Phi$. 

We will not use general formula (\ref{EqSprimeU}) here, but it can be useful in other problems: If a problem has a certain symmetry, then it can be used to restrict the search space to those that obey the same symmetry. This observation about symmetry groups was described in Ref.~\cite{FL}. In the presented framework, this is a particular case of Proposition~\ref{PropSym}, which also allows for a general positive and trace-preserving map, not necessarily an average over a unitary representation of a group.
\end{remark}

\subsection{Three- and more photon part}\label{SecTriple}

In this subsection, we prove Ineq.~(\ref{Eqpdet3u}). Denote the double click probabilities in the case of the perfect and imperfect detections:
\begin{equation}
\begin{split}
\tilde p_{01}^{(b)}&=\Tr\rho_{AB}\big(I_A\otimes\tilde P^{(b)}_{01}\big),\\
p_{01}^{(b)}&=\Tr\rho_{AB}\big(I_A\otimes P^{(b)}_{01}\big),\\
\end{split}
\end{equation}
Define also the corresponding double click probabilities averaged over two bases (the mean double click probabilities):
\begin{equation}
\tilde p_{01}=\frac{\tilde p_{01}^{(z)}+\tilde p_{01}^{(x)}}2,\qquad
p_{01}=\frac{p_{01}^{(z)}+p_{01}^{(x)}}2,
\end{equation}
$b\in\{z,x\}$. In force of
\begin{equation}\label{EqP01ineq}
\tilde P_{01}^{(b)}\geq P_{01}^{(b)}\geq\eta\tilde P_{01}^{(b)},
\end{equation}
we have 
\begin{equation}\label{Eqp01ineq}
\tilde p_{01}\geq p_{01}\geq\eta\tilde p_{01}.
\end{equation} 
The first inequality in Ineq.~(\ref{EqP01ineq}) means that the imperfect detection turns some double clicks into single clicks of detector~0. So, some double clicks are lost in the case of imperfect detection. The second inequality means that the fraction of such ``lost'' double clicks is at most $1-\eta$.

\begin{proposition}\label{Propdb}
Consider the state $\rho_{XB}$ of form (\ref{EqRhoSumN}).
Then 
\begin{equation}\label{Eqt3u}
t_{3+}\leq \frac{p_{01}}{\eta p_{01}^{\rm min}}\equiv t_{3+}^{\rm U},
\end{equation}
where $t_{3+}=\sum_{n=3}^\infty t_n$ and $p_{01}^{\rm min}$ is defined by Eq.~(\ref{Eqp01min}).
\end{proposition}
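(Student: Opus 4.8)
The plan is to exploit the block-diagonal structure of $\rho_{XB}$ in the photon number supplied by Eq.~(\ref{EqRhoSumN}) and reduce the global bound to a single-sector statement. Write
\begin{equation*}
\tilde p_{01}^{(n)}=\tfrac12\Tr\big[\rho^{(n)}_{XB}\,(I_A\otimes\tilde P^{(z)}_{01}+I_A\otimes\tilde P^{(x)}_{01})\big]
\end{equation*}
for the mean perfect-detection double-click weight carried by the $n$-photon part; since a single photon never produces a double click, additivity over sectors gives $\tilde p_{01}=\sum_{n\ge2}\tilde p_{01}^{(n)}$. The heart of the argument will be the per-sector inequality $\tilde p_{01}^{(n)}\ge p_{01}^{\min}\,t_n$ for $n\ge3$, i.e. that every normalized state of at least three photons double-clicks, averaged over the two bases, with probability at least $p_{01}^{\min}$. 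Granting this, I would sum over $n\ge3$, discard the nonnegative $n=2$ term, and use the operator inequality~(\ref{EqP01ineq}) in its averaged form $\tilde p_{01}\le p_{01}/\eta$:
\begin{equation*}
p_{01}^{\min}\,t_{3+}=\sum_{n\ge3}p_{01}^{\min}t_n\le\sum_{n\ge3}\tilde p_{01}^{(n)}\le\tilde p_{01}\le\frac{p_{01}}{\eta},
\end{equation*}
which rearranges at once to the claimed bound~(\ref{Eqt3u}). This part is routine once the per-sector estimate is in hand.

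For the per-sector estimate I would first trace out Alice's qubit (the $\tilde P^{(b)}_{01}$ act on Bob alone) and use joint convexity to reduce to a pure $n$-photon state $\ket\psi$. The structural input is that for $n\ge1$ the vacuum outcome is impossible, so measuring $\ket\psi$ in basis $b$ yields only three physical outcomes: detector~$0$ ($\ket{n,0}_b$), detector~$1$ ($\ket{0,n}_b$), and double click (the remaining $n-1$ outcomes). The key complementarity is that the two no-double-click subspaces $V_z=\mathrm{span}\{\ket{n,0}_z,\ket{0,n}_z\}$ and $V_x=\mathrm{span}\{\ket{n,0}_x,\ket{0,n}_x\}$ are nearly orthogonal: their cross-overlaps all have modulus $2^{-n/2}$, so $\|P_{V_z}P_{V_x}\|\le\tfrac12$ for $n\ge3$ and a state with no double click in one basis double-clicks in the other with probability at least $3/4$. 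I would feed this into an entropic uncertainty relation for the two basis measurements, bounding each full-measurement Shannon entropy from above in terms of its own double-click probability $p^{(b)}_{01}$ (the no-double-click mass occupies two outcomes, the double-click mass spreads over the rest) and the relation's constant from below. I expect the worst case, in which the double-click mass is spread maximally, to reproduce exactly the balance encoded in the defining equation~(\ref{Eqp01min}), $2p_{01}^{\min}\log3+2h(p_{01}^{\min})=1$, forcing $\tfrac12(p^{(z)}_{01}+p^{(x)}_{01})\ge p_{01}^{\min}$. A basis-swap symmetry (a photonwise Hadamard exchanging $V_z\leftrightarrow V_x$ and $p^{(z)}_{01}\leftrightarrow p^{(x)}_{01}$) lets me symmetrize and reduce to the diagonal case $p^{(z)}_{01}=p^{(x)}_{01}$.

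The main obstacle I anticipate is making the per-sector bound \emph{uniform} in $n$: the sector dimension $n+1$ grows, so both the constant in the uncertainty relation and the entropy upper bounds are $n$-dependent, and one must check that their competition never drives the mean double-click probability below $p_{01}^{\min}$ for any $n\ge3$. I expect the extremal (smallest) value at the smallest admissible photon numbers, where the subspaces $V_z,V_x$ overlap most (the cross-overlap $2^{-n/2}$ is largest at $n=3$); the plan is therefore to treat $n=3$ (and $n=4$, where $\|P_{V_z}P_{V_x}\|$ is also $\tfrac12$) as the binding cases and argue monotone improvement for larger $n$. The delicate step is verifying that the balance point is governed by Eq.~(\ref{Eqp01min}) uniformly; it is precisely the wish for this clean $n$-independent constant, rather than the tighter per-$n$ value one could extract from a direct operator-norm estimate of $P_{V_z}+P_{V_x}$, that makes the entropic route the natural one for reproducing the stated $p_{01}^{\min}$.
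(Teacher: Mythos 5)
Your outer reduction (additivity of the double-click weight over photon-number sectors, discarding the $n\le 2$ contributions, and converting perfect-detection to imperfect-detection probabilities via $P^{(b)}_{01}\ge\eta\tilde P^{(b)}_{01}$) is exactly the paper's argument and is fine. The gap is in what you yourself call the heart of the matter: the per-sector bound $\tilde p_{01}^{(n)}\ge p_{01}^{\min}t_n$ for $n\ge3$ is never proved, only hoped for, and the route you sketch cannot deliver it. You propose to apply an entropic uncertainty relation to the two $(n+1)$-outcome measurements \emph{in the symmetric space} $(\mathbb C^2)^{\otimes_s n}$, with the overlap of the no-double-click subspaces $V_z,V_x$ playing the role of the complementarity constant. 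But the Maassen--Uffink constant for those measurements is governed by the largest overlap between \emph{any} pair of basis vectors, not just the extremal ones: at $n=3$ one has $|\braket{3,0|_z\,2,1}_x|=\sqrt6/4$, so the EUR lower bound is $\log(16/6)\approx1.4$ bits, while your entropy upper bounds give $H(b)\le h\bigl(p^{(b)}_{01}\bigr)+(1-p^{(b)}_{01})\log 2+p^{(b)}_{01}\log(n-1)\le 2$ per basis (there are only $n-1=2$ double-click outcomes in the symmetric space at $n=3$). The resulting inequality is vacuous; indeed even the best conceivable constant for two measurements on a four-dimensional space ($\log 4=2$, attained only by mutually unbiased bases) would still yield no constraint on $p_{01}$. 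Moreover, the constant in Eq.~(\ref{Eqp01min}) cannot arise from symmetric-space counting at all: the $\log 3$ there is $\log(2^{n-1}-1)$ at $n=3$, i.e.\ it counts the $2^n-2=6$ double-click outcomes in the \emph{distinguishable-particle} space. The missing idea is precisely the paper's embedding of $(\mathbb C^2)^{\otimes_s n}$ into $(\mathbb C^2)^{\otimes n}$, where the measurements $Z^{\otimes n}$ and $X^{\otimes n}$ are maximally complementary, Maassen--Uffink gives $H(Z^{\otimes n})+H(X^{\otimes n})\ge n$ (growing linearly in $n$), and the outcome counting gives $\log(2^n-2)$; combining these yields $2\tilde p_{01}\log(2^{n-1}-1)+2h(\tilde p_{01})\ge n-2$, whose $n=3$ case is exactly Eq.~(\ref{Eqp01min}), with a separate monotonicity lemma supplying the uniformity in $n$ that you flag as the delicate point.

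Ironically, the one concrete observation you do make --- that all cross-overlaps between $V_z$ and $V_x$ have modulus $2^{-n/2}$, whence $\|P_{V_z}P_{V_x}\|\le\tfrac12$ for $n\ge3$ --- already suffices for a complete and even stronger proof, if you push it through operator norms instead of entropies: for projections, $\|P_{V_z}+P_{V_x}\|\le 1+\|P_{V_z}P_{V_x}\|$, so for any state in the $n$-photon sector
\begin{equation*}
\tilde p^{(z)}_{01}+\tilde p^{(x)}_{01}\;=\;2-\Tr\bigl[\rho\,(P_{V_z}+P_{V_x})\bigr]\;\ge\;1-\|P_{V_z}P_{V_x}\|\;\ge\;\tfrac12,
\end{equation*}
giving $\tilde p_{01}^{(n)}\ge t_n/4$ uniformly for $n\ge3$. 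Since $1/4>p_{01}^{\min}\approx0.07$, this implies Proposition~\ref{Propdb} with room to spare (and would in fact give $t_{3+}\le 4p_{01}/\eta$). You explicitly set this route aside in order to ``reproduce'' the stated constant entropically, but that inverts the logic: $p_{01}^{\min}$ is simply what the paper's (embedded) entropic method outputs, not a target the symmetric-space route can reach. As written, then, your proposal is incomplete; committing either to the paper's embedding argument or to your own norm estimate would close it.
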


We begin with two lemmas.

\begin{lemma}\label{Lemdbmain}
Suppose $\rho_{B}\in\mathfrak T((\mathbb C^2)^{\otimes_s n})$ for $n\geq 3$ and $\rho_{B}$ is a density operator, i.e., $\rho_B\geq0$ and $\Tr\rho_B=1$. Then
\begin{equation}\label{EqDbtilde}
2\tilde p_{01}\log(2^{n-1}-1)+2h(\tilde p_{01})\geq n-2.
\end{equation}

\end{lemma}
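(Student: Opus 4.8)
The plan is to realize the $n$-photon symmetric state as a state on the full $n$-qubit space $(\mathbb C^2)^{\otimes n}$ (supported on the symmetric subspace) and to apply the Maassen--Uffink entropic uncertainty relation to the two \emph{per-photon} measurements in the $z$ and $x$ bases, rather than to the coarse three-outcome POVM $\{\tilde P_0,\tilde P_1,\tilde P_{01}\}$. Refining to the per-photon level is the essential idea: the double-click projectors $\tilde P^{(z)}_{01}$ and $\tilde P^{(x)}_{01}$ have operator-norm overlap close to $1$, so any coarse uncertainty relation involving them is vacuous, whereas measuring every photon individually makes all pairwise overlaps uniform and small.

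Concretely, in basis $b\in\{z,x\}$ I consider the rank-one projective measurement $\{\ket{s}_b\bra{s}\}_{s\in\{0,1\}^n}$, with $\ket{s}_z=\ket{s}$ the computational string and $\ket{s}_x=H^{\otimes n}\ket{s}$, and let $S_b$ denote the outcome. The single-click outcomes are exactly $s\in\{0^n,1^n\}$ (all photons in one mode), which for a state on the symmetric subspace carry total probability $1-\tilde p_{01}^{(b)}$, while the remaining $2^n-2$ strings carry probability $\tilde p_{01}^{(b)}$. The first key step is the overlap: $|\langle s|H^{\otimes n}|s'\rangle|=\prod_i|\langle s_i|H|s'_i\rangle|=2^{-n/2}$ for \emph{all} pairs $s,s'$, so Maassen--Uffink gives $H(S_z)+H(S_x)\ge-2\log 2^{-n/2}=n$.

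The second step bounds each entropy from above by grouping outcomes into single-click and double-click classes. Writing $H(S_b)=h(\tilde p_{01}^{(b)})+\tilde p_{01}^{(b)}H^{(b)}_{\rm dc}+(1-\tilde p_{01}^{(b)})H^{(b)}_{\rm sc}$ and using $H^{(b)}_{\rm sc}\le\log 2=1$ (two outcomes) together with $H^{(b)}_{\rm dc}\le\log(2^n-2)$, then summing over $b$, invoking concavity of $h$, and writing $\tilde p_{01}=\tfrac12(\tilde p_{01}^{(z)}+\tilde p_{01}^{(x)})$, I obtain $n\le 2h(\tilde p_{01})+2\tilde p_{01}\log(2^n-2)+2-2\tilde p_{01}$. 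Rearranging and absorbing the single-click term through $\log(2^n-2)-1=\log(2^{n-1}-1)$ yields the claimed inequality $2\tilde p_{01}\log(2^{n-1}-1)+2h(\tilde p_{01})\ge n-2$.

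I expect the main obstacle to be conceptual rather than computational: recognizing that the measurement must be refined to the per-photon level so that all Maassen--Uffink overlaps collapse to the single uniform value $2^{-n/2}$, and verifying that for a state supported on the symmetric subspace the per-photon outcome probabilities on $\{0^n,1^n\}$ reproduce exactly the physical single-click probability $1-\tilde p_{01}^{(b)}$. Once these identifications are secured, the entropy grouping and the concavity estimate are routine. If desired, one may first symmetrize $\rho_B$ under $X^{\otimes n}$ via Corollary~\ref{CorSym}, which leaves $\tilde p_{01}$ invariant, but the argument above does not require it.
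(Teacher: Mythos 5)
Your proposal is correct and takes essentially the same approach as the paper's proof: the paper likewise embeds the symmetric state into $(\mathbb C^2)^{\otimes n}$, applies the Maassen--Uffink relation to the two per-photon product-basis measurements (all overlaps equal to $2^{-n/2}$, giving an entropy sum of at least $n$), bounds each entropy by splitting outcomes into the two single-click strings and the $2^n-2$ double-click strings, and finishes with concavity of $h$ and the identity $\log(2^n-2)-1=\log(2^{n-1}-1)$. The point you flag as needing verification---that for a permutation-symmetric state the product-basis probabilities of $\{0^n,1^n\}$ reproduce the physical single-click probability---is exactly the observation the paper makes about $\bar P^{(b)}_{01}$ versus the embedded $\tilde P^{(b)}_{01}$.
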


In particular, it follows that $\tilde p_{01}$ is strictly positive for $n\geq3$.

\begin{proof}

The Hilbert space $(\mathbb C^2)^{\otimes_s n}$ is naturally embedded into the space $(\mathbb C^2)^{\otimes n}$. Let us then consider the operators $\rho_B$ and $\tilde P^{(b)}_{01}$ as  operators acting on $(\mathbb C^2)^{\otimes n}$. For a binary string $a=a_1\ldots a_n$, denote $\ket{a_1\ldots a_n}=\ket{a_1}\otimes\ldots\otimes\ket{a_n}\in(\mathbb C^2)^{\otimes n}$. Let $|a|$ denote the Hamming weight of $a$, i.e., the number of ones in $a$. The embedding is as follows:
\begin{equation}
(\mathbb C^2)^{\otimes_s n}\ni\ket{n-k,k}_z\mapsto
\frac1{\sqrt{k!}}\sum_{\substack{a\in\{0,1\}^n\\|a|=k}}\ket a
\in(\mathbb C^2)^{\otimes n}
\end{equation}
Define also the operators corresponding to a double click in $(\mathbb C^2)^{\otimes n}$ (in the case of the perfect detection):
\begin{equation}
\begin{split}
\bar P^{(z)}_{01}&=\sum_{\substack{a\in\{0,1\}^n\\1\leq|a|\leq n-1}}
\ket a\bra a,\\
\bar P^{(x)}_{01}&=\sum_{\substack{a\in\{0,1\}^n\\1\leq|a|\leq n-1}}
H^{\otimes n}\ket a\bra a H^{\otimes n}.
\end{split}
\end{equation} 
The operators $\bar P^{(b)}_{01}$ differ from the embeddings of $\tilde P^{(b)}_{01}$. However, if $\rho_{B}\in\mathfrak T((\mathbb C^2)^{\otimes_s n})$, i.e., $\rho_B$ is symmetric with respect to the permutations of the qubits, then
\begin{equation}
\Tr\tilde P^{(b)}_{01}\rho_{B}=\Tr\bar P^{(b)}_{01}\rho_{B}.
\end{equation}
Hence, $\tilde p_{01}^{(b)}$  can be equivalently defined in terms of $\bar P^{(b)}_{01}$. Let $Z$ and $X$ be the usual Pauli operators, and let $H(Z^{\otimes n})$ and $H(X^{\otimes n})$ be the Shannon entropies of the results of the measurements of the $n$-qubit observables $Z^{\otimes n}$ and $X^{\otimes n}$, respectively. Then
\begin{equation}\label{EqHZ}
\begin{split}
H(Z^{\otimes n})&=
h\left(\tilde p^{(z)}_{01}\right)+
\tilde p^{(z)}_{01}H(Z^{\otimes n}|\text{ double click})
\\&+
\left(1-\tilde p^{(z)}_{01}\right)H(Z^{\otimes n}|\text{ single click})
\\&
\leq h\left(\tilde p^{(z)}_{01}\right)+\tilde p^{(z)}_{01}\log(2^n-2)+1-\tilde p^{(z)}_{01}
\\&=
h\left(\tilde p^{(z)}_{01}\right)+\tilde p^{(z)}_{01}\log(2^{n-1}-1)+1.
\end{split}
\end{equation}
The inequality takes place since the single click event corresponds to two outcomes ($a=0\ldots0$ and $a=1\ldots1$) and the double click corresponds to the rest $2^n-2$ outcomes. Analogously,
\begin{equation}\label{EqHX}
H(X^{\otimes n})
\leq h\left(\tilde p^{(x)}_{01}\right)+\tilde p^{(x)}_{01}\log(2^{n-1}-1)+1.
\end{equation}
From the other side, due to entropy uncertainty relations \cite{MaassenUffink}, we have
\begin{equation}\label{EqEUR}
H(Z^{\otimes n})+H(X^{\otimes n})\geq n.
\end{equation}
Using Ineqs.~(\ref{EqHZ})--(\ref{EqEUR}) and the concavity of $h$, we obtain
\begin{multline}
2\tilde p_{01}\log(2^{n-1}-1)+2h(\tilde p_{01})
\\\geq 
\left(\tilde p^{(z)}_{01}+\tilde p^{(x)}_{01}\right)\log(2^{n-1}-1)+
h\left(\tilde p^{(z)}_{01}\right)+h\left(\tilde p^{(x)}_{01}\right)\\\geq
n-2,
\end{multline}
q.e.d.
\end{proof}

\begin{lemma}\label{Lemdbmonot}
Suppose $n\geq 3$. There is a unique value $p^{{\rm min}, (n)}_{01}$ for $p^{(n)}_{01}$ that turns inequality~(\ref{EqDbtilde}) into the equality. This is a lower bound for the mean double click probability for a given $n$. Moreover, $p^{{\rm min}, (n)}_{01}$ is a non-decreasing function of $n$ for $n\geq3$.
\end{lemma}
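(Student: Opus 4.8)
The plan is to reduce everything to the one-variable function
\[
F_n(p)=2p\log(2^{n-1}-1)+2h(p),\qquad p\in[0,1],
\]
so that Ineq.~(\ref{EqDbtilde}) reads $F_n(\tilde p_{01})\ge n-2$ and $p^{{\rm min},(n)}_{01}$ is by definition the root of $F_n(p)=n-2$. Abbreviating $c_n=\log(2^{n-1}-1)$, I would first establish the shape of $F_n$. Since $h'(p)=\log\frac{1-p}{p}$ decreases strictly from $+\infty$ to $-\infty$ on $(0,1)$, the derivative $F_n'(p)=2c_n+2\log\frac{1-p}{p}$ vanishes at the single point $p^*_n=1-2^{-(n-1)}$, so $F_n$ is strictly increasing on $[0,p^*_n]$ and strictly decreasing on $[p^*_n,1]$. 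A direct evaluation gives $F_n(0)=0$, $F_n(1)=2c_n$, and, using $h(p^*_n)=(n-1)-p^*_n c_n$, the maximum $F_n(p^*_n)=2(n-1)$.

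Existence, uniqueness and minimality then follow from bookkeeping with these values. For $n\ge3$ one has $0=F_n(0)<n-2<2(n-1)=F_n(p^*_n)$, so by strict monotonicity on $[0,p^*_n]$ there is exactly one solution of $F_n(p)=n-2$ there, which I take as $p^{{\rm min},(n)}_{01}$. On the decreasing branch $F_n\ge F_n(1)=2\log(2^{n-1}-1)>2(n-2)\ge n-2$ (the strict inequality uses $2^{n-1}-1>2^{n-2}$ for $n\ge3$), so no further root exists and the feasible set $\{p:F_n(p)\ge n-2\}$ is exactly $[p^{{\rm min},(n)}_{01},1]$. Hence every admissible $\tilde p_{01}$ satisfies $\tilde p_{01}\ge p^{{\rm min},(n)}_{01}$, which is the claimed minimal value.

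For the monotonicity in $n$, write $p_n=p^{{\rm min},(n)}_{01}$ and $G_{n+1}(p)=F_{n+1}(p)-(n-1)$, whose unique increasing-branch root is $p_{n+1}$. Evaluating $G_{n+1}$ at $p_n$ and eliminating $2h(p_n)$ via the defining identity $2c_np_n+2h(p_n)=n-2$ collapses everything to
\[
G_{n+1}(p_n)=2(c_{n+1}-c_n)\,p_n-1 .
\]
If this is nonpositive, then—because $G_{n+1}$ is strictly increasing on $[0,p^*_{n+1}]$, because $p_n<p^*_n<p^*_{n+1}$, and because $G_{n+1}(p_{n+1})=0$—strict monotonicity forces $p_n\le p_{n+1}$. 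To bound $p_n$ I use $h\ge0$ in its defining equation, giving $p_n\le(n-2)/(2c_n)$; since $c_{n+1}-c_n>0$, it then suffices to show $(c_{n+1}-c_n)(n-2)\le c_n$, i.e. $c_{n+1}/c_n\le(n-1)/(n-2)$.

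I expect this last elementary inequality to be the main obstacle, though it is routine. Writing $c_n=(n-1)-\epsilon_n$ with $\epsilon_n=-\log(1-2^{-(n-1)})>0$, it rearranges to $(n-1)\epsilon_n-(n-2)\epsilon_{n+1}\le1$, and since the subtracted term is nonnegative it is enough to prove $(n-1)\epsilon_n\le1$, equivalently $1-2^{-(n-1)}\ge2^{-1/(n-1)}$. Here the left side approaches $1$ geometrically while the right side approaches $1$ only like $1-(\ln2)/(n-1)$; using $-\log(1-x)\le x/[(1-x)\ln2]$ one gets $(n-1)\epsilon_n\le (n-1)2^{-(n-1)}/[(1-2^{-(n-1)})\ln2]$, whose right-hand side is decreasing in $n$ and already below $1$ at $n=3$. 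This settles the inequality for all $n\ge3$ and completes the argument; every other step is convexity/monotonicity accounting.
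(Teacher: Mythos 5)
Your proof is correct, and for the existence/uniqueness/minimality part it is essentially the paper's argument in different clothing: the paper checks signs of $F(x,y)=2y\log(2^x-1)+2h(y)-x+2$ at $y=0,\,1/2,\,1$ and invokes concavity in $y$, while you establish unimodality of $F_n(p)=2pc_n+2h(p)$ with the explicit maximum $F_n(p_n^*)=2(n-1)$ at $p_n^*=1-2^{-(n-1)}$ and the endpoint value $F_n(1)=2c_n>2(n-2)$; the content is equivalent (unimodality here \emph{is} concavity), though your explicit evaluation of the maximum is a pleasant bonus. The genuine divergence is in the monotonicity of $p^{{\rm min},(n)}_{01}$ in $n$. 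The paper promotes $n$ to a continuous variable $x$, defines the root $y=f(x)$ implicitly, and shows $f'(x)=-F'_x/F'_y\geq 0$ by a sign analysis of the partial derivatives, the crux being the bound $f(x)<(1-2^{-x})/2$, which makes $F'_x<0$; this is verified by evaluating $F$ at that point. You instead compare consecutive discrete roots: evaluating the level-$(n+1)$ equation at $p_n$ collapses (after eliminating $2h(p_n)$) to the condition $2(c_{n+1}-c_n)p_n\leq 1$, which you reduce via $h\geq 0$ (so $p_n\leq (n-2)/2c_n$) to the elementary inequality $(n-1)\epsilon_n\leq 1$ with $\epsilon_n=-\log\bigl(1-2^{-(n-1)}\bigr)$, settled by a majorant that is decreasing in $n$ and below $1$ at $n=3$. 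Each route has its merits: yours is entirely discrete and elementary, avoiding implicit differentiation and any question of differentiability of the root, at the cost of a longer chain of estimates; the paper's is shorter and proves the slightly stronger (though unneeded) fact that the root is monotone in the continuous parameter $x$. Both hinge on the same qualitative fact that the root lies on the increasing branch, well below the point where the derivative or increment in $n$ could change sign, and both are anchored at $n=3$, consistent with $p^{\rm min}_{01}=p^{{\rm min},(3)}_{01}$ defined by Eq.~(\ref{Eqp01min}).
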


Note that $p_{01}^{\rm min}$ defined in Eq.~(\ref{Eqp01min}) coincides with $p^{{\rm min}, (3)}_{01}$. The proof is technical and does not contain essential ideas; so, it is given in Appendix~\ref{AppLemdbmonot}.

\begin{proof}[Proof of Proposition~\ref{Propdb}]
In force of Lemmas~\ref{Lemdbmain} and~\ref{Lemdbmonot}, 
\begin{equation}
\frac1{2t_n}\Tr\rho_{XB}^{(n)}[\tilde P_{01}^{(z)}+\tilde P_{01}^{(x)}]\geq
p^{{\rm min},(n)}_{01}\geq p^{\rm min}_{01}
\end{equation}
for $n\geq 3$, where, as before, $t_n=\Tr\rho_{XB}^{(n)}$. Note that $p^{\rm min}_{01}$, which is defined by Eq.~(\ref{Eqp01min}), is equal to $p^{{\rm min},(3)}$.   Due to Ineq.~(\ref{EqP01ineq}),
\begin{equation}
\frac1{2t_n}\Tr\rho_{XB}^{(n)}[P_{01}^{(z)}+P_{01}^{(x)}]\geq
\eta p^{{\rm min}}_{01}.
\end{equation}
Then,
\begin{equation}
p_{01}=\frac12\sum_{n=3}^\infty\Tr\rho_{XB}^{(n)}[P_{01}^{(z)}+P_{01}^{(x)}]\geq
\eta t_{3+} p^{\rm min}_{01}.
\end{equation}
The proposition has been proved.
\end{proof}

\subsection{Two-photon part}\label{SecDouble}

Unfortunately, bound (\ref{EqDbtilde}) is trivial for the case $n=2$. As we mentioned in Sec.~\ref{SecMainTh}, the two-photon Bell state $\ket{\Phi^+}$ (\ref{EqBellPhi}) produces no double clicks in both bases. So, estimation of the fraction of two-photon pulses requires a separate analysis. In this subsection, we prove Ineqs.~(\ref{Eqq2L}) and~(\ref{Eqp12U}).

Let us repeat the intuition behind the analysis which was already mentioned in Sec.~\ref{SecMainTh}. If the number of double clicks is small, then the two-photon part of the Bob's state is close to the pure state $\ket{\Phi^+}$. So, the Bob's subsystem is almost uncorrelated with the Alice's one and the QBER is close to 1/2. By this reason, as we will see in the next subsection, it is not advantageous for Eve to increase the fraction of two-photon part of the Bob's state: she introduces a large number of errors obtaining only a small amount of information about the key.

\begin{proposition}\label{PropQ2}
The following inequalities for the state $\rho_{XB}$ of form (\ref{EqRhoSumN})--(\ref{EqRhoBsignSumN}) hold: 
\begin{eqnarray}
\frac{q_2}{t_2}&\geq&\frac{1+\theta_2/\eta}4-\frac{\theta_2}\eta\sqrt{\frac{2p_{01}}{\eta t_2}},\label{Eqq2}\\
\frac{p_1^{(2)}}{t_2}&\leq& 
\frac{\theta_2}2+\theta_2\sqrt{\frac{2p_{01}}{\eta t_2}},\label{Eqp12}
\end{eqnarray}
\end{proposition}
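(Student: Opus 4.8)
The plan is to work entirely inside the three-dimensional two-photon sector $(\mathbb C^2)^{\otimes_s2}$ and to use the orthonormal basis $\{\ket{\Phi^+},\ket{1,1}_z,\ket{1,1}_x\}$, where $\ket{\Phi^+}=\tfrac1{\sqrt2}(\ket{2,0}_z+\ket{0,2}_z)$. The key structural fact I would establish first is that $\ket{1,1}_z$ and $\ket{1,1}_x$ are exactly the vectors triggering the perfect double-click operators $\tilde P^{(z)}_{01}$ and $\tilde P^{(x)}_{01}$ on this sector, while $\ket{\Phi^+}$ is annihilated by both (indeed $\ket{1,1}_z=\tfrac1{\sqrt2}(\ket{2,0}_x-\ket{0,2}_x)$ and $\ket{1,1}_x=\tfrac1{\sqrt2}(\ket{2,0}_z-\ket{0,2}_z)$). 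Hence the populations $\bra{1,1}_z\rho^{(2)}_B\ket{1,1}_z$ and $\bra{1,1}_x\rho^{(2)}_B\ket{1,1}_x$ are the two-photon contributions to $\tilde p^{(z)}_{01}$ and $\tilde p^{(x)}_{01}$, their sum is $2\tilde p^{(2)}_{01}$, and therefore $\bra{\Phi^+}\rho^{(2)}_B\ket{\Phi^+}=t_2-2\tilde p^{(2)}_{01}\ge t_2-2p_{01}/\eta$ by Ineq.~(\ref{EqP01ineq}). This is the quantitative version of the heuristic preceding the proposition: a small double-click rate forces the two-photon state onto $\ket{\Phi^+}$. Throughout I would control coherences by positivity, $|\bra{u}\sigma\ket{v}|\le\sqrt{\bra{u}\sigma\ket{u}\,\bra{v}\sigma\ket{v}}$ for $\sigma\ge0$.

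For~(\ref{Eqp12}) I restrict POVM~(\ref{EqPOVMtild}) to the sector to get $p^{(2)}_1=\theta_2\,\bra{0,2}_z\rho^{(2)}_B\ket{0,2}_z$ and write $\ket{0,2}_z=\tfrac1{\sqrt2}(\ket{\Phi^+}-\ket{1,1}_x)$. The diagonal contribution is $\tfrac12(\bra{\Phi^+}\rho^{(2)}_B\ket{\Phi^+}+\bra{1,1}_x\rho^{(2)}_B\ket{1,1}_x)\le\tfrac12 t_2$, and the cross term is at most $\sqrt{\bra{\Phi^+}\rho^{(2)}_B\ket{\Phi^+}\,\bra{1,1}_x\rho^{(2)}_B\ket{1,1}_x}\le\sqrt{2t_2\,p_{01}/\eta}$, using $\bra{1,1}_x\rho^{(2)}_B\ket{1,1}_x\le 2p_{01}/\eta$. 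This gives $p^{(2)}_1\le\tfrac{\theta_2}2 t_2+O(\sqrt{p_{01}t_2/\eta})$, an estimate of the form~(\ref{Eqp12}); the leading constant $\theta_2/2$ is exactly the value attained by the pure state $\ket{\Phi^+}$, so the $p_{01}\to0$ limit is tight.

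For~(\ref{Eqq2}) I expand $q_2$ with the decomposition~(\ref{EqRhoXB}), $q_2=\tfrac12[\eta^{-1}\Tr((P^{(x)}_1+\tfrac12P^{(x)}_{01})\rho^{(2,+)}_B)+\Tr((P^{(x)}_0+\tfrac12P^{(x)}_{01})\rho^{(2,-)}_B)]$, and keep only the ``wrong-outcome'' populations $\bra{0,2}_x\rho^{(2,+)}_B\ket{0,2}_x$ and $\bra{2,0}_x\rho^{(2,-)}_B\ket{2,0}_x$. Writing $\ket{0,2}_x=\tfrac1{\sqrt2}(\ket{\Phi^+}-\ket{1,1}_z)$ and $\ket{2,0}_x=\tfrac1{\sqrt2}(\ket{\Phi^+}+\ket{1,1}_z)$ and now using Cauchy--Schwarz to bound the cross term from below yields $\tfrac12(\sqrt{P^\pm_\Phi}-\sqrt{P^\pm_{11z}})^2$ for each conditional state, where $P^\pm_\Phi,P^\pm_{11z}$ are the $\ket{\Phi^+}$- and $\ket{1,1}_z$-weights of $\rho^{(2,\pm)}_B$. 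The joint convexity of $(x,y)\mapsto(\sqrt x-\sqrt y)^2$ then lets me replace the two conditional states by the averaged weights $P_\Phi\ge t_2-2p_{01}/\eta$ and $P_{11z}\le 2p_{01}/\eta$, and expanding $(\sqrt{P_\Phi}-\sqrt{P_{11z}})^2=P_\Phi+P_{11z}-2\sqrt{P_\Phi P_{11z}}$ produces a leading term proportional to $t_2$ minus a $\sqrt{p_{01}t_2/\eta}$ correction, giving a bound of the form~(\ref{Eqq2}).

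I expect the $q_2$ estimate to be the hard part, for two reasons. First, $q_2$ is an error rate in the $x$ basis, whereas what pins the state near $\ket{\Phi^+}$ is the double-click rate, and the relevant part here (a small $\ket{1,1}_z$ weight) is the $z$-basis double click; so the argument genuinely couples the two mutually unbiased bases and both $\ket{1,1}_z$ and $\ket{1,1}_x$ must be tracked. Second, $\Gamma_2$ weights the two conditional states $\rho^{(2,\pm)}_B$ asymmetrically (factors $\eta^{-1}$ and $1$), so one cannot symmetrize with $Z_A\otimes Z^{\otimes2}_B$ as in the single-photon case of Proposition~\ref{PropSingle} --- that transformation does not preserve $\Gamma_2$; the convexity step is precisely what reduces the two-state expression to the single observable $p_{01}$ while keeping the bound valid for a genuinely asymmetric $\rho^{(2)}_{XB}$. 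The remaining work is bookkeeping: turning the Cauchy--Schwarz cross-terms into the exact prefactors of~(\ref{Eqq2}) and~(\ref{Eqp12}), with their $\theta_2$ and $\eta$ dependence.
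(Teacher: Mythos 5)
Your proposal is correct in substance, and its second half takes a genuinely different route from the paper's proof. The first step is shared: both arguments exploit that $\{\ket{\Phi^+},\ket{1,1}_z,\ket{1,1}_x\}$ is an orthonormal basis of the two-photon sector whose last two vectors carry exactly the perfect-detection double clicks; your population identity $\bra{\Phi^+}\rho^{(2)}_B\ket{\Phi^+}=t_2-2\tilde p^{(2)}_{01}\ge t_2-2p_{01}/\eta$ is a sharper form of the paper's fidelity bound $\tilde p_{01}\ge\bigl(1-\braket{\Phi^+|\rho^{(2)}_B|\Phi^+}\bigr)/2$. The transfer to $q_2$ and $p_1^{(2)}$ is where you diverge: the paper converts fidelity to trace distance (Fuchs--van de Graaf) and then uses Lipschitz continuity of the observables around the reference state $\tfrac12 I_A\otimes\ket{\Phi^+}\bra{\Phi^+}$, whereas you bound matrix elements directly by Cauchy--Schwarz and recombine the two conditional states via joint convexity of $(x,y)\mapsto(\sqrt x-\sqrt y)^2$. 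Your route is more elementary, and it is more rigorous precisely on the point you flag: the paper's decomposition $T(\rho_{XB},\tfrac12I_A\otimes\ket{\Phi^+}\bra{\Phi^+})=\tfrac12T(\rho^{(2,+)}_B,\cdot)+\tfrac12T(\rho^{(2,-)}_B,\cdot)$ followed by Fuchs--van de Graaf tacitly treats $\rho^{(2,\pm)}_B$ as equally weighted normalized states; for an asymmetric state such as $\rho^{(2)}_{XB}=\ket-\bra-\otimes\ket{\Phi^+}\bra{\Phi^+}$ (conditional traces $0$ and $2t_2$) that step is not even well defined, and no symmetrization is available since, as you correctly observe, $Z\otimes Z^{\otimes2}$ fails to preserve $\Gamma_2$ when $\eta<1$. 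Your convexity step handles such states correctly; it also sidesteps the fact that $\Gamma_2$ restricted to two photons has operator norm $2-\eta>1$, which the paper's ``properties of the trace distance'' step silently ignores.

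One caveat on constants: your argument yields $q_2/t_2\ge\tfrac12-\sqrt{2p_{01}/(\eta t_2)}$ and $p_1^{(2)}/t_2\le\tfrac{\theta_2}2+\theta_2\sqrt{2p_{01}/(\eta t_2)}$, which do not literally match (\ref{Eqq2})--(\ref{Eqp12}) (you get leading constant $\tfrac12$ instead of $\tfrac{1+\theta_2}4$, which is stronger, but coefficient $\sqrt2$ or $\theta_2\sqrt2$ on the square root instead of $1$). This residue is not mere bookkeeping you omitted, and you should not expect to close it: the paper's own proof also produces $\sqrt{2p_{01}/(\eta t_2)}$ rather than the stated $\sqrt{p_{01}/(\eta t_2)}$, and its reference value $q_2=(1+\theta_2)/4$ for $\tfrac12I_A\otimes\ket{\Phi^+}\bra{\Phi^+}$ is in fact $(3-\eta)/4$ once the $\eta^{-1}$ weight in (\ref{EqGamma2}) is kept, so the proposition's displayed constants are internally inconsistent with the paper's derivation. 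What matters is that your bounds are of the required form and strong enough for the only place the proposition is used: combined with $p^{(2)}_{\rm det}\le t_2\le p^{(2)}_{\rm det}/\theta_2$ from (\ref{Eqpdetp})--(\ref{Eqtp}), they imply $q_2\ge q_2^{\rm L}$ and $p_1^{(2)}\le p_1^{(2),\rm U}$ as defined in Theorem~\ref{ThMain}. In that operational sense your proof is complete, and in places tighter and more careful than the paper's.
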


\begin{proof}
First consider the case $t_2=1$. The vectors $\ket{\Phi^+}$, $\ket{\Phi^-}$, and $\ket{\Psi^+}$, where
\begin{equation}
\begin{split}
\ket{\Phi^-}&=\frac1{\sqrt2}(\ket{2,0}_z-\ket{0,2}_z)=\ket{1,1}_x,\\
\ket{\Psi^+}&=\ket{1,1}_z=\frac1{\sqrt2}(\ket{2,0}_x-\ket{0,2}_x),
\end{split}
\end{equation}
compose a basis in $(\mathbb C^2)^{\otimes_s2}$.
As we see, the  state $\ket{\Phi^-}\bra{\Phi^-}$ produces $\tilde p_{01}^{(z)}=0$ and $\tilde p_{01}^{(x)}=1$, while the state $\ket{\Psi^+}\bra{\Psi^+}$ produces $\tilde p_{01}^{(z)}=1$ and $\tilde p_{01}^{(x)}=0$. In both cases, $\tilde p_{01}=1/2$. It follows that
\begin{equation}
\tilde p_{01}= \frac{1-\braket{\Phi^+|\rho^{(2)}_{B}|\Phi^+}}2.
\end{equation}
Then,
\begin{equation}\label{EqTF}
T(\rho^{(2)}_B,\ket{\Phi^+}\bra{\Phi^+})
\leq\sqrt{1-\braket{\Phi^+|\rho^{(2)}_{B}|\Phi^+}}=\sqrt{2\tilde p_{01}},
\end{equation}
where $T$ is the trace distance, and we have used the well-known relation between the trace distance and the fidelity.

If $\rho^{(2)}_{XB}=\frac12I_A\otimes\ket{\Phi^+}\bra{\Phi^+}$, then $q_2=(1+\theta_2/\eta)/4$ and $p_1^{(2)}=\theta_2/2$.

Recall that, for arbitrary density operators $\rho$ and $\sigma$, the trace distance is proportional to the sum of the absolute values of the operator $\rho-\sigma$. Due to general form (\ref{EqRhoXB}) of $\rho_{XB}^{(2)}$,
\begin{multline}
\rho_{XB}^{(2)}-\frac12I_A\otimes\ket{\Phi^+}\bra{\Phi^+}\\=
\frac12\Big[
\ket+\bra+
\otimes(\rho^{(+)}_B-\ket{\Phi^+}\bra{\Phi^+})
\\+
\ket-\bra-
\otimes(\rho^{(-)}_B-\ket{\Phi^+}\bra{\Phi^+})\Big].
\end{multline}
Obviously, the spectrum of this operator is the union of the spectra of the operators $(\rho^{(\pm)}_B-\ket{\Phi^+}\bra{\Phi^+})/2$. Hence,
\begin{multline}
T\left(\rho_{XB}^{(2)},\frac12I_A\otimes\ket{\Phi^+}\bra{\Phi^+}\right)
\\=
\frac12T\left(\rho^{(2,+)}_{B},\ket{\Phi^+}\bra{\Phi^+}\right)
+
\frac12T\left(\rho^{(2,-)}_{B},\ket{\Phi^+}\bra{\Phi^+}\right).
\end{multline}
Now we can use the relation between the trace distance and the fidelity [already used in Ineq.~(\ref{EqTF})] to obtain
\begin{multline}
T\left(\rho_{XB}^{(2)},\frac12I_A\otimes\ket{\Phi^+}\bra{\Phi^+}\right)
\\
\leq
\frac12\sqrt{1-\braket{\Phi^+|\rho^{(2,+)}_{B}|\Phi^+}}
+
\frac12\sqrt{1-\braket{\Phi^+|\rho^{(2,-)}_{B}|\Phi^+}}.
\end{multline}
Due to the concavity of the square root and since 
\begin{equation}
\frac{\rho^{(2,+)}_{B}+\rho^{(2,-)}_{B}}2=\Tr_X\rho_{XB}^{(2)}=\rho_B^{(2)},
\end{equation}
we have
\begin{equation}
T\left(\rho^{(2)}_{XB},\frac12I_A\otimes\ket{\Phi^+}\bra{\Phi^+}\right)
\leq
\sqrt{1-\braket{\Phi^+|\rho^{(2)}_{B}|\Phi^+}}.
\end{equation}
Finally, Ineqs.~(\ref{EqTF}) and~(\ref{Eqp01ineq}) give
\begin{equation}
T\left(\rho_{XB}^{(2)},\frac12I_A\otimes\ket{\Phi^+}\bra{\Phi^+}\right)
\leq\sqrt{2\tilde p_{01}}\leq\sqrt{2p_{01}/\eta}.
\end{equation}

Recall the known properties of  trace distance: $|\Tr[P(\rho-\sigma)]|\leq T(\rho,\sigma)$ for an arbitrary projector $P$ and, hence, $|\Tr[A(\rho-\sigma)]|\leq \|A\|_\infty T(\rho,\sigma)$ for an arbitrary self-adjoint operator $A$ with the operator norm $\|A\|_\infty$. Since $\|\Pi_2\Gamma_2\Pi_2\|_\infty=\theta_2/\eta$ and $\|\Pi_2\Gamma_3\Pi_2\|_\infty=\theta_2$, we have
\begin{equation*}
\begin{split}
\left|q_2-\frac{1+\theta_2/\eta}4\right|&\leq
\frac{\theta_2}\eta
\sqrt{\frac{2p_{01}}\eta},
\\
\left|p_1^{(2)}-\frac{\theta_2}2\right|&\leq
\theta_2
\sqrt{\frac{2p_{01}}\eta},
\end{split}
\end{equation*} 
hence,
\begin{eqnarray}
&&q_2\geq\frac{1+\theta_2/\eta}4-
\frac{\theta_2}\eta\sqrt{\frac{2p_{01}}\eta},\\
&&p_1^{(2)}
\leq\frac{\theta_2}2+
\theta_2\sqrt{\frac{2p_{01}}{\eta}}.
\end{eqnarray}
We have obtained the required bound (\ref{Eqq2}) for the case $t_2=1$.
If $t_2<1$, then all derivations above should be performed for $\rho^{(2)}_{XB}/t_2$. In particular, the observables $q_2$, $p^{(2)}_1$, and $p_{01}$ should be substituted by $q_2/t_2$, $p_1^{(2)}/t_2$, and $p_{01}/t_2$. Thus, we obtain Eqs.~(\ref{Eqq2}) and~(\ref{Eqp12}).
\end{proof}

The required Ineqs.~(\ref{Eqq2L}) and~(\ref{Eqp12U}) obviously follow from Ineqs.~(\ref{Eqq2}), (\ref{Eqp12}), and~(\ref{Eqtnineq}).

\subsection{Final formula and remarks}\label{SecFin}

We proceed to the final steps of the proof of Theorem~\ref{ThMain}. 
From Ineqs.~(\ref{EqDWmin}) and (\ref{EqH1neq1new}), we have
\begin{equation}\label{EqMainPre}
K\geq \min_{(p_{\rm det}^{(2)},p_{\rm det}^{(3+)})}
p^{(1)}_{\rm det}
\left[
1-h\left(\frac{1-\tilde\delta_x^{\rm L}}2\right)\right]
-p_{\rm det}h(Q_z),
\end{equation}
where
\begin{equation}\label{EqDeltaxlderiv}
\tilde\delta_x^{\rm L}=
\frac{\sqrt\eta(\tilde t_1^{\rm L}-2q_1^{\rm U})}{p^{(1)}_{\rm det}}
\end{equation}
and
\begin{equation}\label{Eqpdet1}
p_{\rm det}^{(1)}=p_{\rm det}-p_{\rm det}^{(2)}-p_{\rm det}^{(3+)}\geq0.
\end{equation}
The minimization in Ineq.~(\ref{EqMainPre}) is performed over all $\big(p_{\rm det}^{(2)},p_{\rm det}^{(3+)}\big)$ that can be obtained from some density operator $\rho_{AB}$ and given constraints (\ref{EqGamma1})--(\ref{EqGamma4}). To estimate this expression from below, we will minimize over, generally, a broader range: over the pairs of positive numbers $\big(p_{\rm det}^{(2)},p_{\rm det}^{(3+)}\big)$ such that $\tilde\delta_x^{\rm L}\leq 1$ and restrictions (\ref{Eqpdet1}) and (\ref{Eqpdet3u}) are satisfied. Denote this set $D$. Restrictions (\ref{Eqpdet1}) and  $\tilde\delta_x^{\rm L}\leq 1$ come from the positivity of $\rho_{AB}$ [cf. Ineq.~(\ref{EqDeltaxz})], while Ineq.~(\ref{Eqpdet3u}) also has been proved. In other words, any density operator $\rho_{AB}$ satisfies these restrictions.

Replacement of $p_{\rm det}^{(3+)}$ by the upper bound $p_{\rm det}^{(3+),\rm U}$ in Eq.~(\ref{Eqp11Lderiv}) and, consequently, in Eqs.~(\ref{Eqt1Lderiv}) and (\ref{EqDeltaxlderiv}) gives $p_1^{(1),\rm L}$, $t_1^{\rm L}$, and $\delta_x^{\rm L}$, respectively [see Eqs.~(\ref{Eqp11L}), (\ref{Eqt1l}), and (\ref{EqDeltaxl})]. This turns Ineq.~(\ref{EqMainPre}) into the desired Ineq.~(\ref{EqMain}). But  this replacement should be justified.

\begin{lemma}\label{Lempdet3}
Under the conditions of Theorem~\ref{ThMain}, the minimum of the right-hand side of Ineq.~(\ref{EqMainPre}) over $\big(p_{\rm det}^{(2)},p_{\rm det}^{(3+)}\big)\in D$ is equal to the right-hand side of Ineq.~(\ref{EqMain}). The expression under minimization in Ineq.~(\ref{EqMain}) is well-defined for all $p_{\rm det}^{(2)}\in\big[0,p_{\rm det}^{(2),\rm U}]$.
\end{lemma}

The proof is technical and is given in Appendix~\ref{Apppdet3}. 

To finish the proof of Theorem~\ref{ThMain}, we need to prove that the expression under minimization in Ineq.~(\ref{EqMain}) is a convex function of $p_{\rm det}^{(2)}$. If we redenote $p_{\rm det}^{(1),\rm L}$ as an independent variable $x$ (not to be confused with the denotation of a basis), then $p_{\rm det}^{(2)}$ can be expressed as $p_{\rm det}^{(2)}=p_{\rm det}-p_{\rm det}^{(3+),\rm U}-x$, then we can equivalently minimize over $x$. Since $t_1^{\rm L}-2q_1^{\rm U}$ is a convex function of $x$, it suffices to prove the following lemma.

\begin{lemma}\label{LemConcave}
Consider the function of the form
\begin{equation}
f(x)=x h\left(\frac12-\frac{g(x)}x\right)
\end{equation}
defined on some segment $x\in[x_0,x_1]$, $x_0,x_1\geq0$, where $0\leq\frac{g(x)}x\leq\frac12$. If $g(x)$ is convex, then $f(x)$ is concave.
\end{lemma}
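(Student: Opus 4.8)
The plan is to recognize $f$ as a monotone composition of a jointly concave function with the convex map $g$, and then to invoke a standard composition rule. First I would rewrite
\[
f(x) = x\, h\!\left(\frac12 - \frac{g(x)}{x}\right) = \phi\big(x, g(x)\big),
\qquad
\phi(x,u) = x\, h\!\left(\frac12 - \frac{u}{x}\right).
\]
The key observation is that $\phi$ is exactly the \emph{perspective} of the single-variable function $\psi(w) = h(1/2 - w)$, namely $\phi(x,u) = x\,\psi(u/x)$ for $x>0$. Since the binary entropy $h$ is concave and $w\mapsto 1/2-w$ is affine, $\psi$ is concave; and the perspective operation preserves concavity, so $\phi$ is jointly concave in $(x,u)$ on the region $x>0$, $0\le u/x\le 1/2$.

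Next I would check the monotonicity of $\phi$ in its second slot. Differentiating gives
\[
\partial_u \phi(x,u) = -\,h'\!\left(\frac12-\frac ux\right)
= -\log\frac{1/2+u/x}{1/2-u/x},
\]
and on the admissible range $0\le u/x\le 1/2$ one has $h'\ge 0$ (the entropy is nondecreasing on $[0,1/2]$), so $\partial_u\phi\le 0$; that is, $\phi$ is nonincreasing in $u$.

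The composition rule then finishes the argument. For $\lambda\in[0,1]$ and $x=\lambda x_a+(1-\lambda)x_b$, convexity of $g$ gives $g(x)\le\lambda g(x_a)+(1-\lambda)g(x_b)$; because $\phi$ is nonincreasing in $u$, this yields $\phi\big(x,g(x)\big)\ge\phi\big(x,\lambda g(x_a)+(1-\lambda)g(x_b)\big)$; and joint concavity of $\phi$ bounds the right-hand side below by $\lambda\phi(x_a,g(x_a))+(1-\lambda)\phi(x_b,g(x_b))=\lambda f(x_a)+(1-\lambda)f(x_b)$. Hence $f$ is concave. The one point needing care is the \emph{direction} of the monotonicity: the chain of inequalities closes only because $\phi$ decreases in $u$ while $g$ is convex, so I would make sure to verify $h'\ge 0$ on $[0,1/2]$ rather than take it for granted. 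A purely computational alternative is to differentiate $f$ twice and check $f''\le 0$ directly; this is elementary but messier, whereas the perspective argument exposes the structural reason for the concavity.
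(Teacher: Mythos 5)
Your proof is correct and is essentially the paper's own argument: the paper's appendix establishes concavity by the same two inequalities, its first step (factoring out $tx+(1-t)y$ and applying concavity of $h$ with weights $tx/[tx+(1-t)y]$ and $(1-t)y/[tx+(1-t)y]$) being precisely the joint concavity of your perspective function $\phi$, and its second step (convexity of $g$ combined with monotonicity of $h$ on $[0,\tfrac12]$) being precisely your monotone-composition step. The only difference is presentational: you cite the perspective and composition rules as named convex-analysis facts, whereas the paper verifies the same chain of inequalities inline.
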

The proof is also technical and is given in Appendix~\ref{AppConcave}. This finishes the proof of Theorem~\ref{ThMain}.

\begin{remark}\label{RemImpr}
As we see on Fig.~\ref{Fig1}, formula~(\ref{EqMain}) gives a significant deviation from the upper bound~(\ref{EqK1tight}) for a large detection-efficiency mismatch (small $\eta$). For $\eta<0.3$, the formula~(\ref{EqMain}) gives a worse result even in comparison with the simplified bound~(\ref{EqK1}) for the single-photon Bob's input. Though, usually the mismatch is not large, here we discuss a way to improve bound~(\ref{EqMain}). Namely, we can use a more precise estimate of $p_1^{(2)}$ instead of Ineq.~(\ref{Eqp12U}). Indeed, in Ineq.~(\ref{Eqp12U}) (compare with Ineq.~(\ref{Eqp12})),  we simply use the bound $t_2\leq p_{\rm det}^{(2)}/\theta_2$. However, we can estimate $t_2$ and $p_1^{(2)}$ using  Ineq.~(\ref{Eqp12}) and the equation
\begin{equation}
t_2=p_{\rm det}^{(2)}+p_1^{(2)}\left(\frac1{\theta_2}-1\right)
\end{equation}
[analogous to Eq.~(\ref{Eqt1}), a consequence of Eqs.~(\ref{Eqpdetp}) and~(\ref{Eqtp})].
Then the estimation of $t_2$ and $p_2$ is reduced to a solution of a quadratic equation.
\end{remark}

\begin{remark}\label{RemDark}
In the case of no dark counts, we could take the $\Gamma_2$ observable (weighted mean erroneous detection rate in the $x$ basis) as
\begin{equation}\label{EqGamma2Simp}
\Gamma_2=
\eta^{-1}\ket+\bra+\otimes
P^{(x)}_1+
\ket-\bra-\otimes
P^{(x)}_0
\end{equation}
instead of Eq.~(\ref{EqGamma2}),
i.e., do not include the double clicks. Indeed, we used a non-trivial estimation of $H(X|B)$ (which uses the observable $\Gamma_2$) only for the single-photon part. In practice, double clicks may originate either from multiphoton pulses on Bob's side or from dark counts. We have only the latter origin of double clicks if we consider the single-photon pulses. Since we did not include the dark counts in the detection model [Eqs.~(\ref{EqPOVM}) and~(\ref{EqPOVMtild})], single-photon pulses do not cause double clicks in our model and the simplified formula (\ref{EqGamma2Simp}) could be used. But since, in practice, there are dark counts (though not taken into account in our model), more general  formula~(\ref{EqGamma2}) should be used. Rigorous inclusion of dark count rates into the detection model is a subject for a future work.
\end{remark}

\begin{remark}
Let us discuss the tightness of our bound (\ref{EqP01ineq}) for the three- and more photon part of the density operator. We have rigorously proved that the mean double click rate for every three- or more photon state is at least $p_{01}^{\rm min}\approx0.06$. From the tight numerical bounds of Ref.~\cite{Zhang}, it follows that this bound is approximately 0.25 (under the conjecture that the minimal mean double click rate is a non-decreasing function of a photon number). The difference between the tight numerical result and our analytic bound is larger for the case of efficiency mismatch.

In the case of the normal operation of a QKD system (no eavesdropping or eavesdropping that does not change the statistics of detections), a double click is a rare event because it occurs in the case of simultaneous occurrence of two low-probability events: a dark count and a detection of a photon after the transmission loss. In this case, as we see on the plots, the derived analytic bound (\ref{EqMain}) for the secret key rate is very close to the corresponding single-photon result: The influence of the multiphoton part is anyway small. Since the actual double click rate is low, the difference between bound (\ref{EqP01ineq}) and the tight numerical bound is not critical.

In Ref.~\cite{Zhang}, an example when Eve artificially resends multiple photons is considered. In this case, double click may be not a rare event and the numerical method may give essentially better results. But the aim of the present paper is a good bound for the normal operation of a QKD system.
\end{remark}

\begin{remark}\label{RemNonLinDet}
As we discussed before, Eq.~(\ref{EqPOVMtild}) for the POVM corresponding to an imperfect measurement is valid if the imperfect detection can be modeled by an asymmetric beam splitter followed by a perfect detection. This is an approximation. However, as we can observe, the analysis was  relied neither on the precise formula (\ref{EqPOVMtild}) for the POVM nor on the precise formula (\ref{EqTheta}) for $\theta_n$. Only inequality (\ref{EqP01ineq}) is essential. However, expressions~(\ref{Eqpdetp}) and~(\ref{Eqtp}) and, as a consequence, inequality (\ref{Eqtnineq}) essentially rely on the fact that one detector has the perfect efficiency. So, possibility of a reduction to this case is important. In other words, we assume that detector 0 (the detector with a larger efficiency) can be modeled by an asymmetric beam splitter followed by a perfect detection, but detector 1 is not assumed to be equivalent to this model. If we do not know the precise value of $\theta_2$ used in Proposition~\ref{PropQ2}, then we can use the bounds $\eta\leq\theta_2\leq1$ instead, which means that the probability of detection of a two-photon signal is at least as large as the probability of detection of a single-photon signal.
\end{remark}

\section{Decoy state method in the case of detection-efficiency mismatch}\label{SecDecoy}

\subsection{Estimations}\label{SecDecoyEst}

Now let us take into account that Alice sends not true single-photon pulses, but weak coherent pulses (with the randomized phase). We consider the scheme of one signal state and two weak decoy states. This means that each Alice's pulse can be either a signal pulse with the intensity $\mu_{\rm s}=\mu$ (used for key generation) or one of two decoy pulses with the intensities $\mu_{\rm d_1}=\nu_1$ and $\mu_{\rm d_2}=\nu_2$, with the conditions $0\leq\nu_2<\nu_1$ and $\nu_1+\nu_2<\mu$.

We follow the method of Ref.~\cite{MaLo2005}, where a lower bound for the number of detections originated from the single-photon pulses and an upper bound for the error rate for the single-photon pulses were derived.

The decoy state method for the case of detection-efficiency mismatch was developed in Ref.~\cite{Bochkov}. It is observed there that the decoy state estimates of Ref.~\cite{MaLo2005} have a nice feature that they actually do not impose any assumptions on the efficiency of detectors. They are based solely on simple counting of detections and erroneous detections. In principle, the decoy state estimations are still valid even if Eve has full control on the detector efficiencies. In the case of ideal devices, one needs to estimate only two quantities: the number of single-photon pulses registered by Bob and the error rate in the set of the registered single-photon pulses (note that, in this section, ``single-photon pulse'' and ``multiphoton pulse'' refer to Alice's side). As stated above, in the case of detection efficiency mismatch, these estimates are still valid. However, more detailed information is required in this case, not just these two quantities. In Ref.~\cite{Bochkov}, estimations for the required more detailed quantities are derived using the general method of Ref.~\cite{MaLo2005}. Thus, in comparison with the case of no mismatch, the general method of Ref.~\cite{MaLo2005} still works because it does not rely on any assumptions on the efficiency of detectors, but if we have another analytic formula for the secret key rate for the case of single photon pulses, then we need to derive  decoy state estimations for all additional quantities entering this formula.

Formula~(\ref{EqMain}) for the case of single-photon Alice's pulses also requires some observables (namely, $p_1$, $q$, and $p_{01}$) which enters neither the corresponding formula~(\ref{EqIdeal}) for the case of no mismatch nor the expression for the single-photon Alice's pulses from Ref.~\cite{Bochkov}. So, we need to derive the bounds for these observables using the decoy state method. This is the only difference of the decoy state method in the case of detection-efficiency mismatch.

Redenote the quantities $p_{\rm det}$, $p_1$, $q$, and $p_{01}$ by ${}_1p_{\rm det}$, ${}_1p_1$, ${}_1q$, and ${}_1p_{01}$ where the left subindex~1 denotes that these quantities are conditioned on the single-photon Alice's pulse (exactly as in Eqs.~(\ref{EqGamma1})--(\ref{EqGamma4})). Denote ${}^{v}p_{\rm det}$, ${}^{v}p_1$, ${}^{v}q$, and ${}^vp_{01}$ the analogous quantities conditioned on the event that Alice sends a weak coherent pulse of the type $v\in\{\rm s,d_1,d_2\}$ (i.e., with either the signal intensity $\mu$ or one of two decoy intensities $\nu_1$ and $\nu_2$). Finally, denote 
\begin{gather}
{}_1^{\rm s}p_{\rm det}=\mu e^{-\mu}\times{}_{1}p_{\rm det},
\:\:\:
{}_1^{\rm s}p_{1}=\mu e^{-\mu}\times{}_{1}p_{1},\\
{}_1^{\rm s}q=\mu e^{-\mu}\times{}_{1}q,
\quad
{}_1^{\rm s}p_{01}=\mu e^{-\mu}\times{}_{1}p_{01}.
\end{gather}
These quantities has the following meanings:
${}_1^{\rm s}p_{\rm det}$ is the joint probability that a signal pulse contains a single photon and at least one of the Bob's detectors clicks (conditioned on the measurement in the basis $z$); ${}_1^{\rm s}p_1$ is the joint probability that a signal pulse contains a single photon and Bob obtains a single click of detector one (conditioned on the measurement in the basis $z$), etc.

Let us adapt the formula (\ref{EqMain}) for the secret key rate in the case of single-photon pulses to the case of a coherent light source. Formula (\ref{EqMain}) uses the quantity of detection rate $p_{\rm det}$. However, now we should distinguish between the detection rate ${}^sp_{\rm det}$ for all signal pulses and the detection rate ${}_1^sp_{\rm det}$ for the single-photon signal pulses. 

The last term in Eq.~(\ref{EqMain}) expresses the information leak due to error correction. Error correction works with all pulses, hence, the detection rate ${}^sp_{\rm det}$ for all signal pulses should be used. Now, $Q_z$ denotes the error rate in the $z$ basis in the set of all registered signal pulses.

The first term in Ineq.~(\ref{EqMain}) expresses Eve's ignorance on the sifted key. As usual in the decoy state method, we treat multiphoton Alice's pulses as insecure. Only the signal pulses are used for key generation. Hence, all quantities in the first term in Ineq.~(\ref{EqMain}) should refer only to the single-photon signal states. 

Recall that $p_{\rm det}^{(1),\rm L}$, $\delta_x^{\rm L}$, and $p_{\rm det}^{(2),\rm U}$ (the upper limit of the range of $p_{\rm det}^{(2)}$ in the minimization) are functions of $p_{\rm det}$, $p_1$, $q$, and $p_{01}$ [see Eqs.~(\ref{EqGamma1})--(\ref{EqGamma4})], which are directly observable if the source emits true single photons. For $p_{\rm det}^{(1),\rm L}$, see Eq.~(\ref{Eqpdet1l}) and, consequently, Eq.~(\ref{Eqpdet3u}). For $\delta_x^{\rm L}$, see Eq.~(\ref{EqDeltaxl}) and, consequently, Eqs.~(\ref{Eqt1l}), (\ref{Eqp11L}), (\ref{Eqp12U}), (\ref{Eqq1u}), (\ref{Eqq2L}), and, again, Eqs.~(\ref{Eqpdet1l}) and~(\ref{Eqpdet3u}). By definition, $p_{\rm det}^{(2),\rm U}$ depend on $p_{\rm det}^{(1),\rm L}$ and $\delta_x^{\rm L}$. Thus, ultimately, $p_{\rm det}^{(1),\rm L}$, $\delta_x^{\rm L}$, and $p_{\rm det}^{(2),\rm U}$ are functions of the four mentioned arguments.

Now we should replace these arguments by the corresponding single-photon quantities (which are not directly observable anymore in the case of a coherent light source): ${}_1^{\rm s}p_{\rm det}$, ${}_1^{\rm s}p_1$, ${}_1^{\rm s}q$, and ${}^{\rm s}_1p_{01}$. The functions $p_{\rm det}^{(1),\rm L}$, $\delta_x^{\rm L}$, and $p_{\rm det}^{(2),\rm U}$ remain the same, but now they depend on these new four arguments. Thus, the secret key rate is then lower bounded by
\begin{equation}\label{EqDecoyPre}
K\geq \min_{p_{\rm det}^{(2)}
\in\big[0,p_{\rm det}^{(2),\rm U}\big]
} 
p^{(1),\rm L}_{\rm det}
\left[
1-h\left(\frac{1-\delta_x^{\rm L}}2\right)\right]
-{}^sp_{\rm det}h(Q_z),
\end{equation}
where $p_{\rm det}$, $p_1$, $q$, and $p_{01}$ should be substituted by ${}_1^{\rm s}p_{\rm det}$, ${}_1^{\rm s}p_1$, ${}_1^{\rm s}q$, and ${}^{\rm s}_1p_{01}$ in all quantities (functions of these four arguments) (\ref{Eqpdet3u}), (\ref{Eqq2L})--(\ref{Eqq1u}), (\ref{Eqp11L}), and (\ref{Eqt1l}). Thus, the first term in Ineq.~(\ref{EqDecoyPre}) looks like the same as that of  Ineq.~(\ref{EqMain}), but, in fact, all arguments are now refer only to the single-photon signal states.

The directly observable quantities are now ${}^vp_{\rm det}$, ${}^vp_1$, ${}^vq$, and ${}^vp_{01}$, $v\in\{\rm s,d_1,d_2\}$. Since the quantities ${}_1^{\rm s}p_{\rm det}$, ${}_1^{\rm s}p_1$, ${}_1^{\rm s}q$, and ${}_1^{\rm s}p_{01}$ are  not directly observable, they should be estimated. For each quantity, we should decide whether we require bounds from above, from below or both depending on the monotonicity properties of the right-hand side of Eq.~(\ref{EqDecoyPre}) with respect to these quantities.

The right-hand side of  Ineq.~(\ref{EqDecoyPre}) is an increasing function of $\delta_x^{\rm L}$ because $h(x)$ is an increasing function whenever $x\in[0,1/2]$. The condition $\delta_x^{\rm L}>0$, which ensures that the argument of $h$ falls into this segment, is a condition of Theorem~\ref{ThMain}. Otherwise, Theorem~\ref{ThMain} cannot guarantee a positive secret rate, see a comment after the theorem. Hence, the right-hand side of  Ineq.~(\ref{EqDecoyPre})  is an increasing function of ${}_1^{\rm s}p_1$ and a decreasing function of ${}_1^{\rm s}p_{01}$ and ${}_1^{\rm s}q$. So, we should estimate ${}_1^{\rm s}p_1$ from below and ${}_1^{\rm s}q$ and ${}_1^{\rm s}p_{01}$ from above. 

The dependence of the right-hand side of  Ineq.~(\ref{EqDecoyPre}) on ${}_1^sp_{\rm det}^{(1)}$ is not obvious. So, we should estimate ${}_1^sp_{\rm det}^{(1)}$ both from below and from above. A lower estimate ${}_1^{\rm s}p_{\rm det}^{\rm L}$ is derived below, while, for an upper estimate, we can take a trivial one ${}^{\rm s}p_{\rm det}$ (single-photon signal detections is a subset of all signal detections). Then, we should minimize also over ${}_1^{\rm s}p_{\rm det}\in[{}_1^{\rm s}p_{\rm det}^{\rm L},{}^{\rm s}p_{\rm det}]$. However, in our simulation given below, the minimum over ${}_1^{\rm s}p_{\rm det}$ is always achieved in the lower bound ${}_1^{\rm s}p_{\rm det}^{\rm L}$.

Let us discuss each quantity to be estimated.

(1)~~${}_1p_{\rm det}$ is the usual yield of single-photon states (denoted by $Y_1$ in Refs.~\cite{MaLo2005} and~\cite{Ma2017}), ${}^vp_{\rm det}$ is the overall gain of the pulses of a given type (denoted by $Q_\mu$, $Q_{\nu_1}$ and $Q_{\nu_2}$ in Ref.~\cite{MaLo2005} and by $Q^{a}$ in Ref.~\cite{Ma2017}), and ${}_1^{\rm s}p_{\rm det}$ is the single-photon contribution to ${}^{\rm s}p_{\rm det}$ (denoted by $Q_1^\mu$ in Ref.~\cite{MaLo2005} and by $Q_1^{s}$ in Ref.~\cite{Ma2017}). The only difference is that all our quantities are defined conditioned on the Bob's choice of the $z$ measurement basis. As we noted above, derivations in the decoy state method are still valid in the case of detection-efficiency mismatch because they do not rely on the assumption of equal efficiencies at all.  Thus, we can use the bound derived in Ref.~\cite{MaLo2005}. Conditioning on the choice of the $z$ basis for measurement also does not affect the derivation since all relations used in the derivations in Ref.~\cite{MaLo2005} are still true if we fix a basis. So,
\begin{multline}\label{EqDecoypdet}
{}_1^{\rm s}p_{\rm det}\geq 
{}_1^{\rm s}p_{\rm det}^{\rm L}=
\frac{\mu^2e^{-\mu}}{\mu\nu_1-\mu\nu_2-\nu_1^2+\nu_2^2}
\\
\times
\left[
{}^{\rm d_1}p_{\rm det}e^{\nu_1}
-{}^{\rm d_2}p_{\rm det}e^{\nu_2}
-\frac{\nu_1^2-\nu_2^2}{\mu^2}
({}_1^{\rm s}p_{\rm det}e^\mu-Y_0^{\rm L})
\right],
\end{multline}
where
\begin{equation}\label{EqDecoyY0}
Y_0^{\rm L}=\max\left[
\frac{{}^{\rm d_2}p_{\rm det}\nu_1e^{\nu_2}
-{}^{\rm d_1}p_{\rm det}\nu_2e^{\nu_1}}{\nu_1-\nu_2},
0
\right].
\end{equation}

(2)~~Derivation of the lower bound for ${}_1^{\rm s}p_{1}$ is completely the same. The quantity ${}_1^{\rm s}p_{\rm det}$ is the single-photon contribution to the gain of the signal states conditioned on the Bob's choice of the $z$ basis. Analogously, the quantity ${}_1^{\rm s}p_1$ is the single-photon contribution to the gain of the signal states detected solely by detector~1 conditioned on the Bob's choice of the $z$ basis.  Conditioning on the single clicks of detector~1 also does not affect the derivation in Ref.~\cite{MaLo2005}. Hence, the lower bound for  ${}_1^{\rm s}p_{1}$ (denoted as ${}_1^{\rm s}p_{1}^{\rm L}$) is also given by formulas~(\ref{EqDecoypdet}) and~(\ref{EqDecoyY0}) where all ${}^vp_{\rm det}$ are substituted by ${}^vp_1$.

(3)~~An upper bound for ${}_1^{\rm s}q$ was derived in Ref.~\cite{Bochkov} also as a simple generalization of derivations in Ref.~\cite{MaLo2005}:
\begin{multline}\label{EqDecoyq}
{}_1^{\rm s}q\leq {}_1^{\rm s}q^{\rm U}=
\left[\left(
{}^{\rm d_1}q^{(0)}+{}^{\rm d_1}q^{(1)}/\eta
\right)e^{\nu_1}\right.
\\\left.-
\left(
{}^{\rm d_2}q^{(0)}+{}^{\rm d_2}q^{(1)}/\eta
\right)
e^{\nu_2}
\right]
\frac{\mu e^{-\mu}}{\nu_1-\nu_2},
\end{multline}
where ${}^vq^{(\beta)}$, $\beta\in\{0,1\}$, is the joint probability that (i) Alice sends the pulse encoding bit $1-\beta$ and (ii) Bob obtains a click with the erroneous result $\beta$ (i.e., he obtains either a single click of detector $\beta$, or a double click with $\beta$ as the result of the random bit assignment), conditioned on the event that the pulse is of the type $v$ and the choice of the $x$ basis by both legitimate parties. 

(4)~~As for ${}_1^{\rm s}p_{01}$, we can simply estimate it from above as ${}_1^{\rm s}p_{01}\leq {}^{\rm s}p_{01}$. This estimation has the following intuitive interpretation. We treat multiphoton Alice's pulses as insecure. In the analysis of the case of single-photon Alice's pulses, we treated positions with multiphoton Bob's inputs as insecure. So, the most pessimistic assumption is that all multiphoton Bob's inputs originate from single-photon Alice's pulses: This assumption maximizes the number of insecure positions. 

Thus, we have obtained all required estimations and finished the adaptation of the decoy state method to the case of detection-efficiency mismatch.

\subsection{Simulation}
 
The results of calculations of the secret key rate for the decoy state protocol is given on Fig.~\ref{Fig3}. The parameters have been chosen as follows: the intensity of the signal state $\mu=0.5$, the intensities of two decoy states  $\nu_1=0.1$ and $\nu_2=0$, the fiber attenuation coefficient $\delta=0.2$~dB/km, additional losses in the Bob's optical scheme $\delta_{\rm Bob}=5$~dB, the efficiencies of the detectors $\eta_0=0.1$ and $\eta_1=0.09$ (i.e., $\eta=\eta_1/\eta_0=0.9$),  and the dark count probability per pulse for each detector $Y_0^{(0)}=Y_0^{(1)}=10^{-6}$. For simplicity of the simulation, we neglect the optical error probability, i.e., assume that the interferometer is adjusted perfectly. This probability is typically small, so the inclusion of it will lead to corrections of a higher order of smallness. The choice of the value 0.9 for the mismatch parameter $\eta$ is based on the simulations of \cite{Leuchs} and experimental results of \cite{LoHack} as well as on private communication with experimentalists: Since we do not consider the mismatch induced by Eve, the mismatch due to imperfect manufacturing and setup is typically small (the setup can be calibrated well) and $\eta\geq0.9$.

For the calculation of the actual values  of these quantities, we employ the standard model of losses and errors in a fiber-based QKD setup; see, e.g., Ref.~\cite{MaLo2005}. The probability that a photon emitted by Alice will reach the Bob's detectors is $10^{-(\delta l+\delta_{\rm Bob})/10}$, where $l$ is the transmission distance in kilometers.  Since this probability   is rather small even for $l=0$ and is very small for realistic distances, the probability that an Alice's $i$-photon state  reaches the  Bob's detectors can be approximately taken as $n10^{-(\delta l+\delta_{\rm Bob})/10}$. This approximation actually means that we neglect the possibility that more than one photon from  the Alice's pulse will reach the  Bob's detectors. The quantity $n10^{-(\delta l+\delta_{\rm Bob})/10}$ should be multiplied by the detector efficiency $\eta_\beta$ to obtain the detection probability provided that Alice's pulse contains exactly $n$ photons.
 
Then the actual values ${}^vp_\beta$, $\beta\in\{0,1\}$, are given by
\begin{eqnarray}
{}^vp_\beta
&=&Y^{(\beta)}_0+
\frac12\sum_{n=0}^\infty\left(\frac{\mu_v^n}{n!}e^{-\mu_v}\right)n
10^{-(\delta l+\delta_{\rm Bob})/10}
\eta_{\beta}
\nonumber\\
&=&Y^{(\beta)}_0+
\frac12\mu_v10^{-(\delta l+\delta_{\rm Bob})/10}
\eta_{\beta}\label{Eqp1vActual}
\end{eqnarray} 
Here, the first term is the probability of a dark count in the corresponding detector. The second term is a probability of a registration of a photon. The summation means averaging over the number of photons $n$ according to the Poisson distribution, while the term for a given $n$ was explained above. In the second line, we substitute the sum  by a well-known expectation for the Poisson distribution: $\sum_{n=0}^\infty\frac{\mu_v^n}{n!}e^{-\mu_v}n=\mu_v$. The factor $1/2$ in the second term of the right-hand side of Eq.~(\ref{Eqp1vActual}) is the probability that Alice's bit is equal to $\beta$ (hence, the pulse reaches the detector $\beta$ and not the detector $1-\beta$).

The right-hand side of Eq.~(\ref{Eqp1vActual}) expresses the fact the the detector clicks either due to a dark count or due to a detection of a photon. We neglect the probability of the joint events of a dark count and a detection of a photon since both probabilities are small and this joint probability is of the second order of smallness. We stress that this assumption is not a part of the security proof and used only for the simulation.

Then we have ${}^vp_{\rm det}={}^vp_0+{}^vp_1$. Here the probability of a double click has a higher order of smallness and, hence, is neglected.

However, we should estimate the mean probability of a double click for the formula (\ref{EqDecoyPre}) because, in this formula, it is required \textit{per se}, not as a small correction to another term. It occurs in the case of either two dark counts, or detection of two photons by different detectors, or a dark count in one detector and a detection of a photon by the other detector. The dark count probability and the probability of detection of a photon are approximately of the same order. All terms below are of the second order of smallness. We neglect the terms of the higher orders of smallness (i.e., corresponding to a detection of more than two photons, a detection of two photons and a simultaneous dark count, etc.). Recall that, according to Eq.~(\ref{EqGamma4}), ${}^sp_{01}$ is the mean probability of a double click for the two bases (and, in our case, for a pulse of the signal intensity). Denote ${}^sp^{(b)}_{01}$ denotes the double click probability provided that Bob measures in the basis $b\in\{z,x\}$. Then
\begin{equation}
{}^sp_{01}=\frac{{}^sp^{(z)}_{01}+{}^sp^{(x)}_{01}}2,
\end{equation}
\begin{equation}\label{EqDecoypdcsim}
\begin{split}
&{}^sp^{(b)}_{01}=Y_0^{(0)}Y_0^{(1)}
\\
&+
\frac{Y_0^{(0)}\eta_1+Y_0^{(1)}\eta_0}2
\mu
10^{-(\delta l+\delta_{\rm Bob})/10}
+
\\
&+\frac12\sum_{n=2}^\infty \frac{n(n-1)}2
\frac{\mu^n}{n!}e^{-\mu}
\left(
10^{-(\delta l+\delta_{\rm Bob})/10}
\right)^2
\\
&\times
(1-p_b)\eta_0\eta_1
\\
&=Y_0^{(0)}Y_0^{(1)}+
\frac{Y_0^{(0)}\eta_1+Y_0^{(1)}\eta_0}2
\mu
10^{-(\delta l+\delta_{\rm Bob})/10}
+
\\
&+\frac{\mu^2}4
10^{-2(\delta l+\delta_{\rm Bob})/10}
(1-p_b)\eta_0\eta_1.
\end{split}
\end{equation}
where, as before, $p_b$, $b\in\{z,x\}$, is the probability of choosing the basis $b$. Note that ${}^sp_{01}$ does not depend on $p_b$.

In Eq.~(\ref{EqDecoypdcsim}), the first term $Y_0^{(0)}Y_0^{(1)}$ is the probability of the double clicks in both detectors. The second term is the probability of a photon registration in one detector and a double click in the other detector averaged over the number of photos in the pulse, see Eq.~(\ref{Eqp1vActual}). The factor $1/2$ in the second term is again [like in Eq.~(\ref{Eqp1vActual})] the probability of the coincidence of Alice's bits and Bob's detector. More precisely, consider, for definiteness, the event of a dark count in detector~0 and a photon registration in detector~1, which corresponds to the term proportional to $Y_0^{(0)}\eta_1$. If the bases of Alice and Bob coincide, this event may occur only if Alice sent bit~1, which has the probability $1/2$. If the bases of Alice and Bob do not coincide, then, independently on Alice's bit value, her photon arrives at detector~1 with probability $1/2$. Thus, in any case, we obtain the factor $1/2$.

The third term in Eq.~(\ref{EqDecoypdcsim}) is the probability of detection of two photons by different detectors, also averaged over the number of photos in the pulse. Since we neglect the optical error, a detection of two photons by different detectors may occur only if the Alice's and Bob's bases are different. If Bob chooses basis $\beta$, the factor $1-p_\beta$, $\beta\in\{z,x\}$, is a probability that Alice chooses the other basis. In this case, the probability that two photons, which have reached the detectors, ``choose'' different detectors, is $1/2$, which explains this prefactor before this term. If the pulse has $n$ photons, then the binomial coefficient $n(n-1)/2$ is a number of ways to choose two photons from $n$. The factor $10^{-2(\delta l+\delta_{\rm Bob})/10}$ is the probability that a given pair of photons reaches Bob's detectors. We have used the well-known expression for the second factorial moment of the Poisson distribution: $\sum_{n=2}^\infty n(n-1)\frac{\mu^n}{n!}e^{-\mu_v}=\mu_v^2$.

Since we neglect the optical error, an erroneous detection occurs only in the case of a dark count (which is erroneous with the probability $1/2$):
${}^vq^{(\beta)}=Y_0^{(\beta)}/2$.

On Fig.~\ref{Fig3}, we compare the secret key rate according to  formula (\ref{EqDecoyPre}) (with the decoy state method estimates) and the secret key rate in the case of no efficiency mismatch but the same average detection efficiency $(\eta_0+\eta_1)/2$ (i.e., formula (\ref{EqIdeal}) combined with the usual decoy state estimates). The ratio of the secret key rate in the first case to that in the second case (an analogue of Fig.~\ref{Fig2}) is shown on Fig.~\ref{Fig4}. Again (as on Figs.~\ref{Fig1} and~\ref{Fig2}), from Figs.~\ref{Fig3} and~\ref{Fig4}, we see that the reduction of the secret key rate due to detection-efficiency mismatch is almost negligible (but still strictly positive and, hence, should be estimated by the presented methods) whenever the mismatch is small.

The initial increase of the ratio on Fig.~\ref{Fig4} (with the maximum on the distance approximately 80~km) is caused by the decrease of the double click rate for large distances due to transmission loss. Let us explain this. Some Alice's pulses are multiphoton. For large distances, the probability that two photons achieve Bob's lab is negligible. However, for small distances, it may be significant. Multiphoton pulses arriving at Bob's lab cause double clicks [corresponds to the third term in Eq.~(\ref{EqDecoypdcsim})]. Double clicks on Bob's side are treated as insecure, so, the estimated secret key rate decreases if the double click rate increases. 

One can say that, in our formalism, positions where Alice emits a multiphoton pulse and at least two  photons from it achieve Bob's lab  reduce the estimated secret key rate twice. Firstly, Alice's multiphoton pulses are treated as insecure in the decoy state method. Secondly, if two or more photons in a pulse achieves Bob's lab (the number of such pulses is estimated mainly by the double click rate), then, as we explained in Sec.~\ref{SecDecoyEst}, we pessimistically treat them as originated from the single-photon Alice's pulses. In other words, we pessimistically assume that Alice sent a single-photon pulse and additional photons were added by Eve and treat this pulse as insecure. This additionally reduces the estimated secret key rate for the single-photon Alice's pulses.

\begin{figure}[t]
\begin{centering}
\includegraphics[width=1\columnwidth]{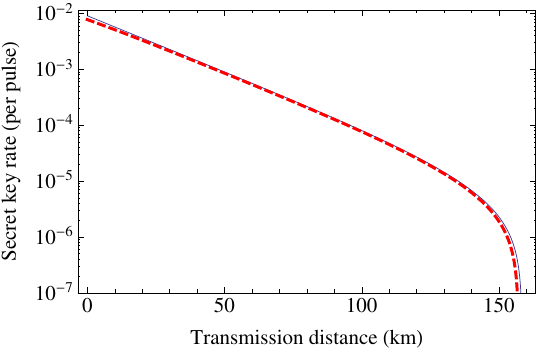}
\end{centering}
\vskip -4mm
\caption
{
Secret key rate of the decoy state BB84 protocol with detection-efficiency mismatch. The parameters are as follows: the intensity of the signal state, $\mu=0.5$; the intensities of two decoy states,  $\nu_1=0.1$ and $\nu_2=0$; the fiber attenuation coefficient, 0.2~dB/km; additional losses in the Bob's optical scheme, 5~dB; the efficiencies of the detectors, $\eta_0=0.1$ and $\eta_1=0.09$ (i.e., $\eta=\eta_1/\eta_0=0.9$); and  the dark count probability per pulse for each detector, $Y_0^{\beta=0}=Y_0^{\beta=1}=10^{-6}$. Red dashed line: formula (\ref{EqDecoyPre}) with decoy method estimates. Blue line: the case of no efficiency mismatch but the same average detection efficiency $(\eta_0+\eta_1)/2$ (like on Fig.~\ref{Fig2}).
}
\label{Fig3}
\end{figure}

\begin{figure}[t]
\begin{centering}
\includegraphics[width=1\columnwidth]{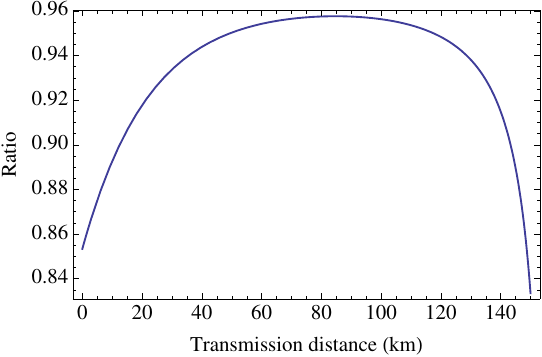}
\end{centering}
\vskip -4mm
\caption
{
Decrease of secret key rate in the detection efficiency-mismatch case with respect to the no-mismatch case in the case of decoy state protocol: the ratio of the secret key rate in the mismatch case with the detector efficiencies 1 and $\eta$ to the secret key rate in the no-mismatch case with both  efficiencies equal to $(1+\eta)/2$ (like on Fig.~\ref{Fig2}). The parameters are as on Fig.~\ref{Fig3}.
}
\label{Fig4}
\end{figure}

\section{Conclusions}

We have solved an important problem for practical QKD and rigorously proved the security of the BB84 protocol with detection-efficiency mismatch for the multiphoton case on both Alice's output and Bob's input. The main formula of the present paper is (\ref{EqMain}). It is formulated for the case of single-photon Alice's pulses. The second important development is an adaptation of the decoy state method to the case of detection-efficiency mismatch. 

We have shown that the reduction of the secret key rate due to detection-efficiency mismatch is almost negligible whenever the mismatch is small and the QBER is far from the known critical value 11\%.

The proposed new methods are not restricted to this particular problem and can be used in other QKD security proofs. The security proof for a QKD protocol is reduced to a convex optimization problem. Proposition~\ref{PropSym} (with Remark~\ref{RemSym}) allows one to reduce the dimensionality of the search space and, thus, can be used in both analytic proof and numerical approach to QKD \cite{Lutk-num-pre,Lutk-num,EntanglVerif,Zhang}. Propositions~\ref{Propdb} and~\ref{PropQ2} allows one to rigorously estimate the number of multiphoton events in various contexts. A rigorous estimation of the number of multiphoton events was the lacking part of the analysis of Ref.~\cite{Zhang}. Multiphoton attacks in mistrustful quantum cryptography were considered also in \cite{DiamantiMultiphot}. It would be interesting to find applications of Propositions~\ref{Propdb} and~\ref{PropQ2} in mistrustful quantum cryptography as well.

An open problem is to include the dark count rates and, in particular, dark count rate mismatch in the detection model. The difference between dark count rates of two detectors also may affect the security and the secret key rate.

Another open problem is generalization to the case when the detection-efficiency mismatch is not constant, but is under partial Eve's control, or, in other words, Eve-induced mismatch rather than mismatch only due to manufacturing and setup. Such attacks are described and employed experimentally \cite{Leuchs,LoHack,Makarov,Pirandola-rev,Xu}. Ref.~\cite{Zhang} analyzes the security in this general case under the same conjecture about the number of multiphoton events supported by numerical evidence. That is, the idea that the number of multiphoton events can be estimated using the number of double clicks and/or the error rate averaged over the two bases works also in the general case. One might hope that Proposition~\ref{Propdb} or its modification can be used to obtain a rigorous bound for the number of multiphoton events in this case as well. 

Finally, let us note that, originally, the Devetak--Winter formula for the secret key rate (\ref{EqDW}) is valid only for collective Eve's attacks: Eve prepares the state $\rho^{\otimes N}_{ABE}$ where $N$ is the number of sendings. For the asymptotic case $N\to\infty$, the most general, coherent attacks are reduced to the collective attacks using the quantum version of the de Finetti representation \cite{Renner}, so, this is not an actual restriction. For finite $N$, the entropy accumulation technique \cite{EntrAccum,EntrAccum2,EntrAccum3} can be used for better finite-size corrections. However, a concrete formula using this technique is to be elaborated. 
 
\begin{acknowledgments}
The author is grateful to Norbert L\"utkenhaus and Yanbao Zhang for fruitful discussions. This work was funded by the Ministry of Science and Higher Education of the Russian Federation (grant number 075-15-2020-788).
\end{acknowledgments}
\bigskip
\appendix

\section{Proof of Lemma~\ref{Lemdbmonot}}\label{AppLemdbmonot}

Let us define the function
\begin{equation}
F(x,y)=2y\log(2^{x-1}-1)+2h(y)-x+2
\end{equation}
for $x>0$ and $y\in[0,1]$. We can see that
$F(x,0)<0$ and $F(x,1/2)>0$ for $x\geq3$. Hence, for any fixed $x\geq3$, $F(x,y)$ has a  root denoted by $y=f(x)$. Since $F(x,1)>0$ for $x\geq3$ and $F$ is concave with respect to $y$, there are no more roots of $F(x,y)$. For $x=n$ denote $f(x)=p^{{\rm min}, (n)}_{01}$. 

Now let us prove that $p^{{\rm min}, (n)}_{01}$ is a non-decreasing function of $n$ for $n\geq3$. It is sufficient to prove that $f'(x)\geq0$ for $x\geq3$. We have
\begin{equation}
f'(x)=-\frac{F'_x(x,f(x))}{F'_y(x,f(x))}.
\end{equation}
$F'_y(x,y)>0$ for $y\leq1/2$,
\begin{equation}
F'_x(x,y)=\frac{2y}{1-2^{-(x-1)}}-1<0
\end{equation}
for $y<\frac12[1-2^{-(x-1)}]$.
Hence, we should prove that 
\begin{equation}\label{EqLemImplicit}
f(x)<\frac{1-2^{-(x-1)}}2
\end{equation}
for $x\geq 3$. Indeed, since $h(y)>0$ for $0<y<1$ and, as can be proved,
\begin{equation}
(1-2^{-(x-1)})\log(2^{x-1}-1)-x+2\geq0
\end{equation}
with the equality only in the point $x=2$,
we obtain $F(x,(1-2^{-(x-1)})/2)>0$ for $x\geq3$, which implies Ineq.~(\ref{EqLemImplicit}). The lemma has been proved.

\section{Proof of Lemma~\ref{Lempdet3}}\label{Apppdet3}

Since we have only two free parameters $p_{\det}^{(2)}$ and $p_{\det}^{(3+)}$, we treat $\tilde\delta_x^{\rm L}$ as a function of these two parameters.

We need to prove two things: 
\begin{enumerate}[(i)]
\item Well-definiteness of the expression under minimization in Ineq.~(\ref{EqMain}) for all $p_{\rm det}^{(2)}\in[0,p_{\rm det}^{(2),\rm U}]$.

\item The minimum of the right-hand side of Ineq.~(\ref{EqMainPre}) on $(p_{\rm det}^{(2)},p_{\rm det}^{(3+)})\in D$ is achieved on $(p_{\rm det}^{(2)},p_{\rm det}^{(3+),\rm U})$, $p_{\rm det}^{(2)}\in[0,p_{\rm det}^{(2),\rm U}]$. 
\end{enumerate}

In paragraphs 1--4 below, we prove (i).  This means that the denominator $p_{\rm det}^{(1),\rm L}$ in expression (\ref{EqDeltaxl}) for $\delta_x^{\rm L}$ is strictly positive (non-zero) and $\delta_x^{\rm L}\leq1$ on this segment. Recall that $p_{\rm det}^{(2),\rm U}$ is defined as the maximal value of $p_{\rm det}^{(2)}$ such that $p^{(1),\rm L}_{\rm det}\geq0$ and $\delta_x^{\rm L}\leq1$. So, we need to prove that $p^{(1),\rm L}_{\rm det}>0$ if $\delta_x^{\rm L}\leq1$.

In paragraphs 5--7, we prove (ii).

1. First, let us observe that $\tilde\delta_x^{\rm L}$ is a decreasing function of $p_{\rm det}^{(3+)}$. This can be proved by direct inspection of the function  $\tilde\delta_x^{\rm L}$ (note that it is rational with respect to the argument $p_{\det}^{(3+)}$) using the inequalities
\begin{equation}
p_1-p_1^{(2),\rm U}\leq
p_1-p_1^{(2)}\leq
p_{\rm det}-p_{\rm det}^{(2)}.
\end{equation}

2. Let us prove that $\tilde\delta_x^{\rm L}\leq1$ or, equivalently, 
\begin{equation}\label{Eqtqp}
\tilde t_1^{\rm L}-2q_1^{\rm U}-\frac{p_{\rm det}^{(1),\rm L}}{\sqrt\eta}\leq0
\end{equation}
in the points $\big(0,p_{\rm det}^{(3+)}\big)$ (i.e., $p_{\rm det}^{(2)}=0$ and $p_{\rm det}^{(3+)}$ is arbitrary).
Obviously, Ineq.~(\ref{Eqtqp}) is true for the point $(0,0)$ (i.e., $p_{\rm det}^{(2)}=p_{\rm det}^{(3+)}=0$) because this is the case of the single-photon Bob's input and $\tilde\delta_x^{\rm L}$ coincides with $\delta_x$ (see Eqs.~(\ref{EqDelta})). $\delta_x\leq1$ due to positivity of $\rho_{AB}^{(1)}$. Since $\tilde\delta_x^{\rm L}$ decreases with $p_{\rm det}^{(3+)}$, Ineq.~(\ref{Eqtqp}) is also satisfied in the points $\big(0,p_{\rm det}^{(3+)}\big)$.

Note that $(0,0)\in D$. Indeed, condition (\ref{Eqpdet1}) is satisfied simply because $p_{\rm det}\geq0$, condition (\ref{Eqpdet3u}) is also obviously satisfied and, as said above, $\tilde\delta_x^{\rm L}=\delta_x\leq1$.

3. Since the left-hand side of Ineq.~(\ref{Eqtqp}) is a convex function of $p_{\rm det}^{(2)}$ and due to the result of the previous paragraph, the equality in Ineq.~(\ref{Eqtqp}) can be achieved in at most one point $p_{\rm det}^{(2)}$, for each $p_{\rm det}^{(3+)}$.

4. By the definition of $p_{\rm det}^{(2),\rm U}$, either $\delta_x^{\rm L}=1$ or $p_{\rm det}^{(1),\rm L}=0$ is true for $p_{\rm det}^{(2)}=p_{\rm det}^{(2),\rm U}$. In the latter case, since, by the conditions of Theorem~\ref{ThMain}, $t_1^{\rm L}-2q_1^{\rm U}>0$ for all $p_{\rm det}^{(2)}\leq p_{\rm det}- p_{\rm det}^{(2),\rm U}$, we have $\delta_x^{\rm L}\to+\infty$ as $p_{\rm det}^{(2)}\to p_{\rm det}^{(2),\rm U}-0$. Hence, $\delta_x^{\rm L}=1$ for some $p_{\rm det}^{(2)}<p_{\rm det}^{(2),\rm U}$, which contradicts the definition of $p_{\rm det}^{(2),\rm U}$. Hence, the former alternative takes place: $\delta_x^{\rm L}=1$ for $p_{\rm det}^{(2)}=p_{\rm det}^{(2),\rm U}$ and $p_{\rm det}^{(1),\rm L}>0$ for $p_{\rm det}^{(2)}\leq p_{\rm det}^{(2),\rm U}$.

From paragraphs~2, 3 and~4, it follows that $p_{\rm det}^{(1),\rm L}>0$ and $\delta_x^{\rm L}\leq1$ (and, hence, the expression under minimization in Ineq.~(\ref{EqMain}) is well-defined) for all $p_{\rm det}^{(2)}\leq p_{\rm det}^{(2),\rm U}$. 

5.  Let us prove that 
\begin{equation}\label{Eqpdet2uineq}
\big(p_{\rm det}^{(2)},p_{\rm det}^{(3+)}\big)\notin D  \text{ whenever } p_{\rm det}^{(2)}>p_{\rm det}^{(2),\rm U}.
\end{equation}
By the results of the paragraph~4, $\tilde\delta_x^{\rm L}>1$ for $\big(p_{\rm det}^{(2)},p_{\rm det}^{(3+),\rm U}\big)$ whenever $p_{\rm det}^{(2)}>p_{\rm det}^{(2),\rm U}$. Since $\tilde\delta_x^{\rm L}$ is a decreasing function of $p_{\rm det}^{(3+)}$, the same is true for all pairs $\big(p_{\rm det}^{(2)},p_{\rm det}^{(3+)}\big)$ with $p_{\rm det}^{(2)}>p_{\rm det}^{(2),\rm U}$ and $p_{\rm det}^{(3+)}\leq p_{\rm det}^{(3+),\rm U}$, which proves Eq.~(\ref{Eqpdet2uineq}).

That is, $p_{\rm det}^{(2)}\leq p_{\rm det}^{(2),\rm U}$ is a restriction for an arbitrary $p_{\rm det}^{(3+)}$, not only  for $p_{\rm det}^{(3+)}=p_{\rm det}^{(3+),\rm U}$.

6. The conditions of Theorem~\ref{ThMain} ensure that $q_1^{\rm U}<t_1^{\rm L}/2$ for all $p_{\rm det}^{(2)}\in[0,p_{\rm det}^{(2),\rm U}]$. Since $p_{\rm det}^{(3+),\rm U}\geq p_{\rm det}^{(3+)}$, we have $\tilde p_1^{(1),\rm L}\geq p_1^{(1),\rm L}$ [compare Eqs.~(\ref{Eqp11Lderiv}) and (\ref{Eqp11L})], and, consequently, $\tilde t_1^{\rm L}\geq t_1^{\rm L}$ [compare Eqs.~(\ref{Eqt1Lderiv}) and (\ref{Eqt1l})]. Hence, $q_t^{\rm U}<\tilde t_1^{\rm L}/2$ and, thus, $\tilde\delta_x^{\rm L}>0$ in all points $\big(p_{\rm det}^{(2)},p_{\rm det}^{(3+),\rm U}\big)$ such that $p_{\rm det}^{(2)}\leq p_{\rm det}^{(2),\rm U}$. Since $\tilde\delta_x^{\rm L}$ decreases with $p_{\rm det}^{(3+)}$ and due to Eq.~(\ref{Eqpdet2uineq}), $\tilde\delta_x^{\rm L}>0$ in $D$.

Hence, the argument $(1-\tilde\delta_x^{\rm L})/2$ of the function $h$ in Ineq.~(\ref{EqMainPre}) belongs to the segment $[0,1/2]$ whenever $(p_{\rm det}^{(2)},p_{\rm det}^{(3+)})\in D$. Recall that $h$ monotonically increases on this segment. This will be used in the next  paragraph.

7. Then, the right-hand side of Ineq.~(\ref{EqMainPre}) is a decreasing function of $p_{\rm det}^{(3+)}$ because both the factor $p_{\rm det}^{(1)}$ before $(1-h)$ and $\tilde\delta_x^{\rm L}>0$ are decreasing functions of $p_{\rm det}^{(3+)}$. Hence, the maximal possible value $p_{\rm det}^{(3+),\rm U}$ of $p_{\rm det}^{(3+)}$ corresponds to the minimal secret key rate for a given $p_{\rm det}^{(2)}$.

Hence, minimization of the right-hand side of Ineq.~(\ref{EqMainPre}) over $\big(p_{\rm det}^{(2)},p_{\rm det}^{(3+)}\big)\in D$ is reduced to the minimization over $p_{\rm det}^{(2)}\in\left[0,p_{\rm det}^{(2),\rm U}\right]$ with $p_{\rm det}^{(3+)}=p_{\rm det}^{(3+),\rm U}$. This proves formula~(\ref{EqMain}).

\section{Proof of Lemma~\ref{LemConcave}}\label{AppConcave}

For arbitrary $x,y$ and an arbitrary $0\leq t\leq 1$, we have
\begin{equation}
\begin{split}
&txh\left(\frac12-\frac{g(x)}x\right)+
(1-t)yh\left(\frac12-\frac{g(y)}y\right)
\\&=
[tx+(1-t)y]
\bigg\lbrace
\frac{tx}{tx+(1-t)y}h\left(\frac12-\frac{g(x)}x\right)
\\&\hspace{2.5cm}+
\frac{(1-t)y}{tx+(1-t)y}h\left(\frac12-\frac{g(y)}y\right)
\bigg\rbrace
\\&\leq
[tx+(1-t)y]
h\left(
\frac12-
\frac{tg(x)+(1-t)g(y)}{tx+(1-t)y}
\right)
\\&\leq
[tx+(1-t)y]
h\left(
\frac12-
\frac{g(tx+(1-t)y)}{tx+(1-t)y}
\right),
\end{split}
\end{equation}
q.e.d. The first inequality comes from concavity of $h$ and the second inequality comes from convexity of $g$ and monotonicity of the function $h(x)$ on the segment $[0,1/2]$.

\end{document}